\spnewtheorem{claim}{Claim}{\itshape}{\rmfamily}
  \DeclareFontShape{T1}{cmr}{m}{scit}{<->ssub*lmr/m/scsl}{}%
\tikzset{
	-Latex,auto,node distance =1 cm and 1 cm,semithick,
	state/.style ={ellipse, draw, minimum width = 0.7 cm},
	point/.style = {circle, draw, inner sep=0.04cm,fill,node contents={}},
	bidirected/.style={Latex-Latex,dashed},
	el/.style = {inner sep=2pt, align=left, sloped},
	square/.style={regular polygon,regular polygon sides=4}
}
\DeclareFontFamily{U} {MnSymbolA}{}
\DeclareFontShape{U}{MnSymbolA}{m}{n}{
  <-6> MnSymbolA5
  <6-7> MnSymbolA6
  <7-8> MnSymbolA7
  <8-9> MnSymbolA8
  <9-10> MnSymbolA9
  <10-12> MnSymbolA10
  <12-> MnSymbolA12}{}
\DeclareFontShape{U}{MnSymbolA}{b}{n}{
  <-6> MnSymbolA-Bold5
  <6-7> MnSymbolA-Bold6
  <7-8> MnSymbolA-Bold7
  <8-9> MnSymbolA-Bold8
  <9-10> MnSymbolA-Bold9
  <10-12> MnSymbolA-Bold10
  <12-> MnSymbolA-Bold12}{}
\DeclareSymbolFont{MnSyA} {U} {MnSymbolA}{m}{n}
\DeclareMathSymbol{\static}{\mathrel}{MnSyA}{0}
\DeclareMathSymbol{\dynamic}{\mathrel}{MnSyA}{96}
\def\qed{\hfill\hbox{${\vcenter{\vbox{
	\hrule height 0.4pt\hbox{\vrule width 0.4pt height 6pt						
	\kern5pt\vrule width 0.4pt}\hrule height 0.4pt}}}$}}
\let\save@mathaccent\mathaccent
\newcommand*\if@single[3]{%
	\setbox0\hbox{${\mathaccent"0362{#1}}^H$}%
	\setbox2\hbox{${\mathaccent"0362{\kern0pt#1}}^H$}%
	\ifdim\ht0=\ht2 #3\else #2\fi
}
\newcommand*\rel@kern[1]{\kern#1\dimexpr\macc@kerna}
\newcommand*\widebar[1]{\@ifnextchar^{{\wide@bar{#1}{0}}}{\wide@bar{#1}{1}}}
\newcommand*\wide@bar[2]{\if@single{#1}{\wide@bar@{#1}{#2}{1}}{\wide@bar@{#1}{#2}{2}}}
\newcommand*\wide@bar@[3]{%
	\begingroup
	\def\mathaccent##1##2{%
		\let\mathaccent\save@mathaccent
		\if#32 \let\macc@nucleus\first@char \fi
		\setbox\z@\hbox{$\macc@style{\macc@nucleus}_{}$}%
		\setbox\tw@\hbox{$\macc@style{\macc@nucleus}{}_{}$}%
		\dimen@\wd\tw@
		\advance\dimen@-\wd\z@
		\divide\dimen@ 3
		\@tempdima\wd\tw@
		\advance\@tempdima-\scriptspace
		\divide\@tempdima 10
		\advance\dimen@-\@tempdima
		\ifdim\dimen@>\z@ \dimen@0pt\fi
		\rel@kern{0.6}\kern-\dimen@
		\if#31
		\overline{\rel@kern{-0.6}\kern\dimen@\macc@nucleus\rel@kern{0.4}\kern\dimen@}%
		\advance\dimen@0.4\dimexpr\macc@kerna
		\let\final@kern#2%
		\ifdim\dimen@<\z@ \let\final@kern1\fi
		\if\final@kern1 \kern-\dimen@\fi
		\else
		\overline{\rel@kern{-0.6}\kern\dimen@#1}%
		\fi
	}%
	\macc@depth\@ne
	\let\math@bgroup\@empty \let\math@egroup\macc@set@skewchar
	\mathsurround\z@ \frozen@everymath{\mathgroup\macc@group\relax}%
	\macc@set@skewchar\relax
	\let\mathaccentV\macc@nested@a
	\if#31
	\macc@nested@a\relax111{#1}%
	\else
	\def\gobble@till@marker##1\endmarker{}%
	\futurelet\first@char\gobble@till@marker#1\endmarker
	\ifcat\noexpand\first@char A\else
	\def\first@char{}%
	\fi
	\macc@nested@a\relax111{\first@char}%
	\fi
	\endgroup
}
\newcommand{\moonlit}{\breve{M}}
\newcommand{\sunlit}{\breve{S}}
\newcommand{\planet}{P}
\newcommand{\planetv}{p}
\newcommand{\surface}{\tilde{P}}
\newcommand{\surfacev}{\tilde{p}}
\begin{document}

\title{Reconfigurable routing in data center networks}



\author{David C. Kutner\inst{1}\orcidID{0000-0003-2979-4513} \and
Iain A. Stewart\inst{1}\orcidID{0000-0002-0752-1971}}
\authorrunning{D. C. Kutner and I. A. Stewart}
%
\institute{
Department of Computer Science, Durham University,\\Upper Mountjoy Campus, Stockton Road, Durham DH1 3LE, UK\\
\email{\{david.c.kutner, i.a.stewart\}@durham.ac.uk}}

\maketitle              
\begin{abstract}
A hybrid network is a static (electronic) network that is augmented with optical switches. The Reconfigurable Routing Problem (RRP) in hybrid networks is the problem of finding settings for the optical switches augmenting a static network so as to achieve optimal delivery of some given workload. The problem has previously been studied in various scenarios with both tractability and NP-hardness results obtained. However, the data center and interconnection networks to which the problem is most relevant are almost always such that the static network is highly structured (and often node-symmetric) whereas all previous results assume that the static network can be arbitrary (which makes existing computational hardness results less technologically relevant and also easier to obtain). In this paper, and for the first time, we prove various intractability results for RRP where the underlying static network is highly structured, for example consisting of a hypercube, and also extend some existing tractability results. 
\keywords{algorithms \and complexity \and  reconfigurable topologies \and optical circuit switches \and software-defined networking.}
\end{abstract}


\emph{The principal improvement to this paper compared to the previous version is the addition of Subsection \ref{sec:lunar}.}

\section{Introduction}

The rapid growth of cloud computing applications has induced demand for new technologies to optimize the performance of data center networks dealing with ever-larger workloads. 
The data center topology design problem (that of finding efficient data center topologies) has been studied extensively and resulted in myriad designs (see, e.g., \cite{CGC16}).  Advances in hardware, such as optical switches reconfigurable in milli- to micro-seconds, have enabled the development of reconfigurable topologies (see, e.g., \cite{HFS21}). These topologies can adjust in response to demand (\emph{demand-aware} reconfigurable topologies) or vary configurations over time according to a fixed protocol (\emph{demand-oblivious} reconfigurable topologies; see, e.g., \cite{AS19}). So-called \emph{hybrid} data center networks are a combination of a static topology consisting of, for example, electrical switches, and a demand-aware reconfigurable topology implemented, for example, with optical circuit switches or free space optics (see, e.g., \cite{CSBDRBTS18,FS19,firefly14,ZSXXTC22}). An intuitive example of a simple reconfigurable topology is illustrated in Fig. \ref{fig:mirror}. 

\begin{figure}[!ht]
    \centering
    \includegraphics[page=5, width =.8\textwidth]{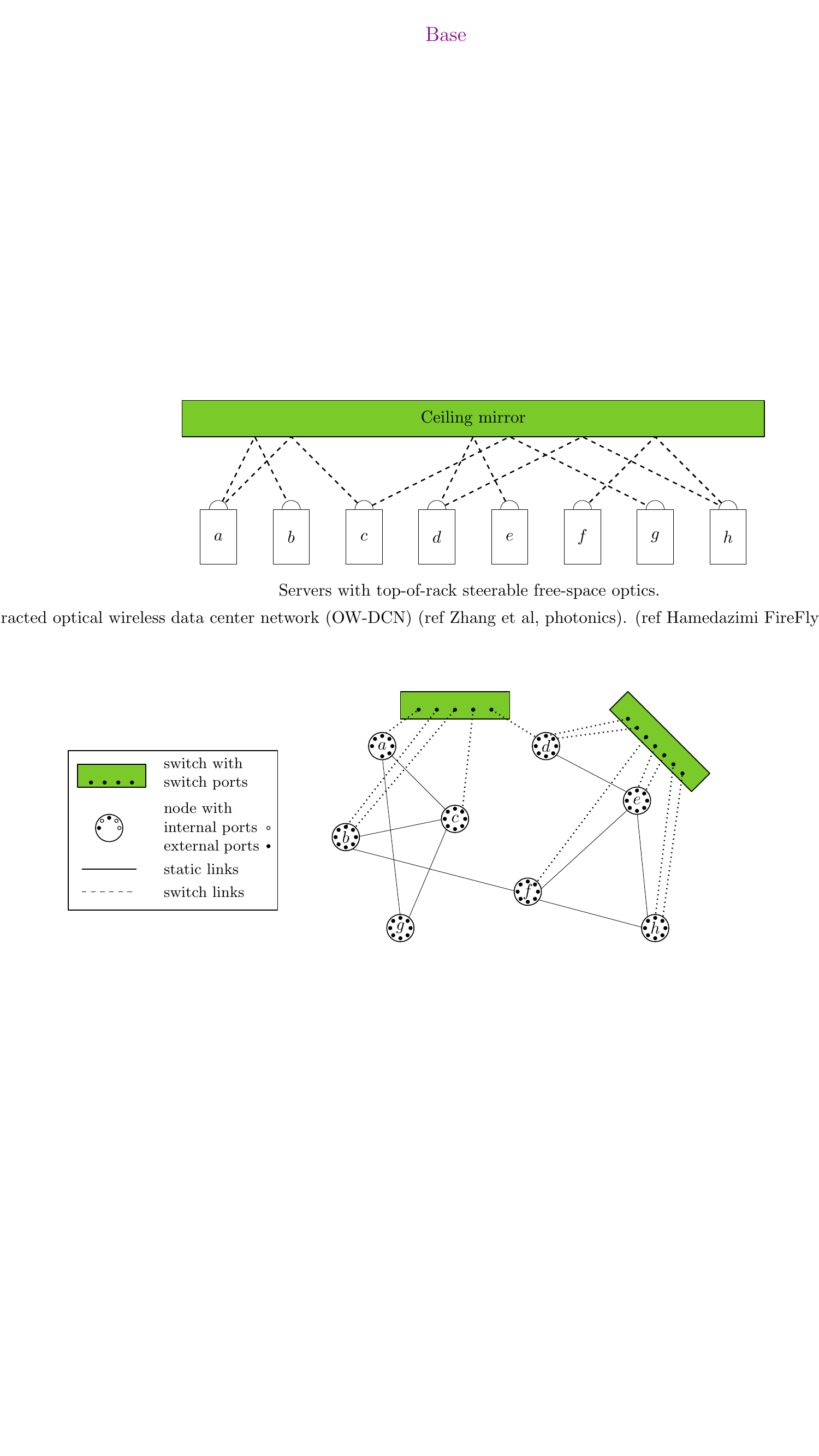}
    \caption{Basic model of an optical wireless data-center network, as described in \cite{CSBDRBTS18,firefly14,ZSXXTC22}. Practical timescales for reconfiguration vary from milliseconds \cite{firefly14} to microseconds or nanoseconds \cite{CSBDRBTS18,ZSXXTC22}.}
    \label{fig:mirror}
\end{figure}

The hybrid network paradigm combines the robustness guarantees of static networks with the ability of demand-aware reconfigurable networks to serve large workloads at very low cost.  
Consider, for example, the hybrid network shown in Fig. \ref{fig:hybrid}, and the configuration shown in Fig. \ref{fig:augmented}. In the (unaugmented) static network, there are two possible paths along which a message from node $b$ to node $d$ may be routed:
$b \static f \static h \static e \static d$ or $b \static f \static e \static d$. In the hybrid network as configured in Fig. \ref{fig:augmented}, the path $b \dynamic a \static c \dynamic d$ (among others) is an option\footnote{We denote by $u\dynamic v$ the concatenation of a switch link from $u$ to some switch, of the internal switch connection, and of a switch link to $v$ from that switch.}. 

\begin{figure}[ht]
	\centering
    \includegraphics[page=2, width=.9\textwidth]{RRP_algowin_figs.pdf}
	\caption{A hybrid network.}\label{fig:hybrid}
\end{figure}

\begin{figure}[ht]
    \includegraphics[page=3, width=.4\textwidth]{RRP_algowin_figs.pdf}
    \hfill
    \includegraphics[page=4, width=.4\textwidth]{RRP_algowin_figs.pdf}
	
	\caption{An augmented network and its abstracted dynamic links.}\label{fig:augmented}
\end{figure}

Of particular interest to us is the question of how the reconfigurable (optical) portion of the network should be configured for some demand pattern, formalized by Foerster, Ghobadi and Schmid \cite{FGS18} as the \textsc{Reconfigurable Routing Problem} (\textsc{RRP}): in short, given a hybrid network (consisting of a static network and of some switches) and a workload, we wish to choose a \emph{configuration} (setting of the switches) which results in an optimal delivery of the workload.

Crucially, existing hardness results are only valid when the static network is allowed to be arbitrary, which is almost never the case in practice where interconnection and data center network design is driven by symmetry, high connectivity, recursive decomposition, and so forth. For example: the popular switch-centric data center network Fat-Tree \cite{ALV08} is derived from a folded Clos network; the server-centric data center network DCell \cite{GWT08} is recursively-structured whereby at each level, a graph-theoretic matching of servers is imposed; and the server-centric data center network BCube \cite{GLL09} is recursively-structured with a construction based around a generalized hypercube. (It should be noted that there do exist examples of unstructured data center networks, such as Jellyfish \cite{SHP12} and Xpander \cite{VSD16} which utilize the theory of random graphs.) Many (but not all) NP-complete problems become tractable when the input is restricted to the graphs providing the communications fabric for data center networks and other interconnection networks. For example, Hamiltonian paths are often trivial to find in many interconnection networks; indeed, no finite connected vertex-transitive graph \emph{without} a Hamiltonian path is known to exist (the Lovász Conjecture contends there is no such graph - see Section 4 of \cite{PR09}). This motivates our investigation into how the complexity of RRP changes when we restrict to more structured and realistic networks. The question of the complexity of RRP for specific network topologies was specifically identified as an area for future work in \cite{FFSV20}. 

In this paper, we establish for the first time hardness results for RRP that apply to various specific families of highly structured static networks such as, for example, the hypercubes. Our constructions are (perhaps not surprisingly) of a much more involved nature than has hitherto been the case. 

\section{Problem Setting}

The decision problem \textsc{Reconfigurable Routing Problem} considered in this paper is a proper restriction of that presented in prior work \cite{FFSV20,FGS18,FPS19}.
In this section, we provide technical detail to fully formalize our version of the problem, but also additionally provide sufficient framing to briefly review existing results and to identify the areas strengthened by our contribution.






We adopt the usual terminology of graph theory though we tend to use `nodes' and `links' when speaking about the components of reconfigurable networks and `nodes' and `edges' when dealing with (abstract) graphs. We denote the natural numbers by $\mathbb{N}$ (we include $0\in\mathbb{N}$) and the non-negative rationals by $\mathbb{Q}_+$.

\subsection{Hybrid networks, (re)configurations and (segregated) routing}\label{sec:configs-routes-segregation}

A hybrid network $G(S)$ can be visualized as in Fig.~\ref{fig:hybrid}, and consists of a static network $G$ and some switches $S$ augmenting it. 
A \emph{static network\/} $G$ can be abstracted as an undirected graph $G=(V,E)$ so that each \emph{static link\/} $(u,v)\in E$ has some fixed \emph{weight\/} $w\in \mathbb{Q}_+$ (reflecting a transmission cost) and is incident with \emph{internal ports\/} of two distinct nodes of $V$.
The number of internal ports of some node $v\in V$ is then exactly the degree of $v$ in the abstracted graph $G$. 
We denote by $S$ a set of \emph{switches\/} augmenting the static network $G$ with \emph{switch links\/} joining \emph{switch ports\/} of some switch to \emph{external ports\/} of some of the nodes of $V$. 
Every switch link has weight $0$ (we say more about switch link weights momentarily). Every switch $s\in S$ has at least two switch ports.

In general, the number of external ports of the nodes of a static network $G=(V,E)$ is variable, as is the number of switch ports of the switches of a hybrid network $G(S)$, and it may be the case that there is more than one switch link between a specific node and a specific switch. We assume that the switch links describe a bijection between the external ports and the switch ports; otherwise, there would be some unused ports, which we can safely ignore.


Given a hybrid network $G(S)$ and a switch $s\in S$, a \emph{switch matching\/} $N_s$ of $s$ is a set of pairs of switch ports of $s$ so that all switch ports involved are distinct. Each switch matching represents an internal setting of the switch and naturally yields a set of pairs of external ports of nodes where all such ports are distinct; we refer to a set of pairs of external ports obtained in this way as a \emph{node matching\/} (note that this differs from the standard graph-theoretic notion of a matching). An illustration of a configured hybrid network is shown in Fig. \ref{fig:augmented}: on the left side, switch matchings are represented as sets of arcs, and on the right side the corresponding node matching is shown as a set of dotted lines.

A \emph{configuration\/} $N$ is a set of switch matchings, one for each switch. A configuration straightforwardly encodes the corresponding node matchings. We say that $(u,v)$ is a \emph{dynamic link} in the configuration $N$ (we sometimes write $(u,v)\in N$) if $(u,v)$ appears in any node matching corresponding to $N$. 

We allocate a fixed weight $\mu\in \mathbb{Q}_+$ to each internal port-to-port connection in a switch $s$. Although a dynamic link is an atomic entity, it can be visualized as consisting of a switch link followed by an internal port-to-port connection in $s$ followed by another switch link. We denote by $G(N)$ the static network $G$ augmented with the dynamic links (each of weight $\mu$) resulting from the configuration $N$ and we call $G(N)$ an \emph{augmented network\/}. In the augmented network visualized in Fig. \ref{fig:augmented}, for example: $(a,b)$ is a dynamic link; $(a,c)$ is a static link; and $(e,h)$ is both a static link a dynamic link. Note that it is possible that an augmented network $G(N)$ is a multigraph. 

The concepts defined above are driven by reconfigurable hardware technology such as optical switches, wireless (beamforming) and free-space optics, all of which establish port-to-port connections, i.e., switch matchings. The survey paper \cite{FS19} provides some detail as regards the relationship between the emergent theoretical models and current opto-electronic technology.

\subsection{Routing in hybrid networks}

Consider again the example shown in Fig. \ref{fig:augmented}. In the configuration shown, a message $M$ from $c$ to node $e$ may be routed:
\begin{enumerate}
    \item via static links only, along the path $\varphi_1:=c \static b \static f \static e$ with weight $3w$, or\label{opt:static-route}
    \item via dynamic links only, along the path $\varphi_2:=c \dynamic d \dynamic h \dynamic e$ with weight $3\mu$, or\label{opt:dynamic-route}
    \item via a combination of static and dynamic links, along the path $\varphi_3:=c \dynamic d \static e$ with weight $\mu+w$.\label{opt:hybrid-route}
\end{enumerate}

Depending on the value of $\mu$, any of the paths may minimize the cost to route $M$: 
if $\mu \ge 2w$ then $\varphi_1$ is optimal; if $\mu \le \frac{w}{2}$ then $\varphi_2$ is optimal; and if $\mu \in [\frac{w}{2},2w]$ then $\varphi_3$ is an optimal. We may wish to bound the number of alternations allowed between optic and static links in any path a message takes; we capture this hardware requirement via a \emph{segregation parameter} $\sigma \in \mathbb N \cup \{\infty\}$, as introduced in \cite{FPS19}, that is the number of alternations between static and dynamic links. In the fully segregated case, $\sigma=0$: messages may be routed either by static links only (as in $\varphi_1$) or by dynamic links only (as in $\varphi_2$). In the non-segregated case, $\sigma=\infty$ and there is no restriction on the number of alternations, so any path is admitted. Note $\varphi_3$ is admitted as a valid path to route $M$ if and only if $\sigma \ge 1$. The \emph{dynamic link limit} $\delta$, like the segregation parameter $\sigma$, is a restriction on admissible flow-paths. Whereas $\sigma$ describes the maximum number of alternations between static and dynamic links permitted, $\delta$ describes the maximum \emph{number} of dynamic links any flow-path may use. In particular, when $\delta=1$ every flow-path must contain at most one dynamic link.

Networks are expected to route many messages (of varying sizes) optimally at the same time. Given a hybrid network $G(S)$ we represent the set of all demands we must optimize for as a \emph{workload} (\emph{matrix\/}) $D$ with entries $\{D[u,v]\in \mathbb{Q}_+: u,v\in V\}$ providing the intended pairwise \emph{node-to-node workloads} (each $D[u,u]$ is necessarily $0$).

Given a configuration $N$ and $u,v\in V$ for which $D[u,v]>0$, we route the corresponding workload via a path in $G(N)$ from $u$ to $v$ in $G(N)$ so that this chosen \emph{flow-path\/} $\varphi(u,v)$ has \emph{workload cost} $D[u,v]\times wt_{G(N)}(\varphi(u,v))$, where the \emph{weight\/} $wt_{G(N)}(\varphi(u,v))$ is the sum of the weights of the links of the flow-path $\varphi(u,v)$ (if $G(N)$ has both a static link $(x,y)$ and a dynamic link $(x,y)$ then we need to say which we are using in $\varphi(u,v)$). The \emph{total workload cost\/} (of $D$ under $N$) is defined as
    	$$\sum\limits_{u,v\in V, D[u,v]>0}D[u,v]\times wt_{G(N)}(\varphi(u,v)).$$
	
Our aim will be to find a configuration $N$ in some hybrid network $G(S)$ and flow-paths in $G(N)$ for which the total workload cost of some workload matrix $D$ is minimized. In an unrestricted scenario, we would choose any flow-path $\varphi(u,v)$ to be a flow-path of minimum weight from $u$ to $v$ in $G(N)$, the weight of which we denote by $wt_{G(N)}(u,v)$. When $\sigma \ne \infty$ we must also ensure the flow-path has at most $\sigma$ alternations. We also have the analogous concepts $wt_G(\varphi(u,v))$ and $wt_G(u,v)$ where we work entirely in the static network $G$. Note that we often describe $D$ by a weighted digraph, which we usually call $D^\prime$, so that the node set is $V$ and there is an edge $(u,v)$ of weight $w>0$ if, and only if, $D[u,v]=w$. We also refer to some $D[u,v]>0$ as a \emph{demand} (from $u$ to $v$).

\subsection{The Reconfigurable Routing Problem}

We are now in a position to introduce our protagonist:

\begin{framed}
    \noindent
    \textbf{\textsc{Reconfigurable Routing Problem ($\sigma$) (RRP($\sigma$))}}\\
    \emph{Input:} $(G, S, \mu, w, D, \kappa)$: $D$ is a workload matrix for the hybrid network $G(S)$ with static (resp. dynamic) links all of weight $w$ (resp. $\mu$).\\
    \emph{Question:} Does $G(S)$ admit some configuration $N$ such that the total workload cost of $D$ under $N$ (where the number of alternations for any path is bounded by $\sigma$) is at most $\kappa$?
\end{framed}

As previously alluded to, this setting is more expressive than we require for most of this paper, and more restrictive than the exact formalism considered in prior work \cite{FFSV20,FGS18,FPS19}: in those works, $w$ and $\mu$ are sometimes allowed to be functions of their endpoints rather than fixed constants. This provides much more expressivity; notably, their model loses no power when it is restricted to inputs where $G$ is a complete graph and there is only one switch, since it is possible to simulate any other instance by assigning prohibitively large weights to any static edges and any pair of switch ports which should not be usable. 

We now turn to the ``realistic'' networks we mentioned in our introduction. Henceforth unless otherwise specified, static link weights are all equal (and normalized to $1$) and dynamic link weights are always some fixed constant $\mu \in \mathbb Q_+$. Also, there is a single switch and all nodes are connected to it with identical hardware. This is both practically relevant and intuitively realistic; see e.g. Fig. \ref{fig:mirror}. Then the set of switches $S$ of the hybrid network consists of just one switch, which is fully described by the number of switch links each node in the hybrid network has, which we call $\mathrm\Delta_S$. This is closely related to the maximum reconfigurable degree $\mathrm\Delta_R$ from \cite{FPS19}, which is an upper bound on the number of external ports per node. The resulting restriction of RRP can be formalized as follows:

\begin{framed}
    \noindent
    \textbf{\textsc{$\mathrm\Delta_S$-switched RRP ($\sigma$)}}\\
    \emph{Input:} $(G, \mu, D, \kappa)$: $D$ is a workload matrix for the hybrid network $G(S)$ with static (resp. dynamic) links all have weight $1$ (resp. $\mu$) (where $S$ consists of a single switch that every node in $G$ is connected to exactly $\mathrm\Delta_S$ times).\\
    \emph{Question:} Does $G(S)$ admit some configuration $N$ such that the total workload cost of $D$ under $N$ (where the number of alternations for any path is bounded by $\sigma$) is at most $\kappa$?
\end{framed}

\section{Results}\label{sec:results}

Table \ref{table:results} shows a summary of hardness results from previous work as well as our three main intractability results. In general terms, we obtain NP-completeness for \textsc{2-switched RRP} and \textsc{3-switched RRP} on any fixed class of static networks of practical interest (defined more fully below) and for any value of $\sigma$. We then restrict our focus (and associated parameters) to the case where the static network is a hypercube when we establish the NP-completeness of \textsc{1-switched RRP($\sigma=3$)} in this setting; we conjecture that a similar construction can be used to establish hardness when $\sigma>3$. We also, in Theorem \ref{thm:sigma0_poly}, show that \textsc{1-switched RRP($\sigma=0$)} is solvable in polynomial time. 
The cases when $\sigma\in \{1,2\}$ remain interesting open problems. Subsection \ref{sec:lunar} is devoted to the case of \textsc{1-Switched RRP}($\delta=1$) (i.e. any flow-path must contain at most a single dynamic link) which entails at most 2 alternations, and we show this case is NP-complete for hypercubes, grids, and toroidal grids. The proof in that section is intended to facilitate the task of proving hardness for other interesting families of networks.

\def\arraystretch{1.5}
\begin{table}[!t] \footnotesize
		\centering
		\setlength{\tabcolsep}{1pt}
          \begin{tabular}{
				| >{~}m{.18\linewidth}
				| >{\centering\arraybackslash}m{.09\linewidth} 
				| >{\centering\arraybackslash}m{.07\linewidth} 
				| >{\centering\arraybackslash}m{.07\linewidth} 
                | >{\centering\arraybackslash}m{.12\linewidth} 
				| >{\centering\arraybackslash}m{.18\linewidth} 
                | >{\centering\arraybackslash}m{.24\linewidth} 
				  |}
			 
			\hline
			Result  
			& $|S|$
            & $\mathrm\Delta_R$
            & $\sigma$/$\delta$
			& $D$
            & link weights
            & notes
            \\
            \hline
        	\cite{FFSV20}, Theorem 1
			& $\mathrm\Theta(n)$ \hspace{5pt} (or 1 \tablefootnote{By using variable $\mu$ with prohibitively large weights, it is possible to simulate many switches with just one.})
            & $\mathrm\Theta(n)$
            & \multirowcell{3}{\vspace{10pt}\\any \\$\sigma \ge 2$}
			& \multirowcell{3}{\vspace{10pt}\\ sparse,\\ all values\\ $0$ or $1$}
            & variable; $w \in [1,100n^2]$ $\mu \in [1, 100n^2]$    
            & Showed inapprox.  within $\mathrm\Omega(\log n)$
            \\
            \cline{1-3} \cline{6-7}
			\cite{FGS18}, Lemma 1
			& $\mathrm\Theta(n)$
            & \multirowcell{2}{$1$}
            & 
			& 
            & \multirowcell{3}{\vspace{23pt} \\ fixed; \\ $w=\mu=1$}
            & \makecell{All switches \\have 3 ports.}
            \\
            \cline{1-2} \cline{7-7}
			\cite{FGS18}, Theorem 2
			& \multirowcell{5}{\vspace{85pt}\\$1$}
            & 
            & 
			& 
            & 
            & $G$ has $\mathrm\Theta(n)$ components
            \\
            \cline{1-1} \cline{3-5} \cline{7-7}
            \cite{FPS19}, Theorems 4.1, 4.2
			& 
            & \multirowcell{2}{\vspace{13pt}\\$2$}
            & \multirowcell{3}{\vspace{23pt}\\ any\\ $\sigma \ge 0$}
			& dense, values in $\mathrm{poly}(n)$
            & 
            & $G$ is empty; there are no static links 
            \\
            \cline{1-1}\cline{5-7}
            
            Theorem \ref{thm:2switchedRRP}  
			& 
            & 
            & 
			& \multirowcell{3}{\vspace{32pt}\\sparse, \\values in\\ $\mathrm{poly}(n)$} 
            & fixed; \hspace{15pt} $w=1$ \hspace{5pt} $\mu \in \mathrm\Theta(\frac{1}{\mathrm{poly}(n)})$
            & \multirowcell{2}{$G \in \mathcal H$, where $\mathcal H$\\ is any polynomial\\ family of networks\\ (incl. hypercubes, \\grids, cycles).}
            \\
			\cline{1-1} \cline{3-3} \cline{6-6}
            Theorem \ref{thm:3switchedRRP}  
			& 
            & 3
            & 
			& 
            & fixed; \hspace{15pt} $w=1$ \hspace{5pt} $\mu \in \mathrm\Theta(\frac{1}{\mathrm{log}(n)})$
            & 
            \\
			\cline{1-1} \cline{3-4} \cline{6-7}
            Theorem \ref{thm:hypercubeRRPsigma3}  
			& 
            & \multirowcell{2}{\vspace{10pt}\\$1$}
            & $\sigma=3$
			& 
            & \multirowcell{2}{\vspace{-5pt}\\ fixed; \\ $w=1$ \\ any $\mu \in (0,1)$}
            & $G$ is a hypercube
            \\
			\cline{1-1} \cline{4-4} \cline{7-7}
                        Theorem \ref{thm:lunarRRP}  
			& 
            & 
            & $\delta=1$
			& 
            & 
            & $G$ is a hypercube, grid, or toroidal grid
            \\
			\hline
		\end{tabular}\medskip
		\caption{Settings for some pre-existing hardness results for \textsc{RRP}. $|S|$ is the number of switches; $\mathrm\Delta_R$ is the maximum number of external ports per node; $\sigma$ is the segregation parameter and $\delta$ is the dynamic link limit; $D$ is the workload matrix; $n$ denotes the number of nodes in the instance. }\label{table:results}
	\end{table}


As is standard in NP-hardness proofs, we reduce from known NP-complete problems to instances of RRP; the challenge is that, due to the expansive scope of our theorems, we lose several ``degrees of freedom'' which are used for encoding hard instances in, e.g., \cite{FFS19,FFSV20,FGS18}. Specifically, we may not make use of varying static or dynamic link weights to prohibit certain connections, nor encode any features of the input instance in the topology of the hybrid network $G(S)$. For example, in Lemma 1 \cite{FGS18}, many small switches with two feasible configurations each are used to encode a truth assignment, and in Theorem 1 of \cite{FFSV20} ``bad'' links are given weights of order $\mathrm\Theta(n^2)$. Neither of these mechanisms can be leveraged to obtain hardness in our setting; in this sense, our hardness results are strictly stronger and also harder to obtain than those from \cite{FFS19,FFSV20,FGS18}.
We are constrained to choose a \emph{size} for the network $G$, and then to encode the input instance in the demand matrix $D$.  

Our first two results hold for a wide class of graph families, which may be of broader interest for the study of computational hardness in network problems. Rather than allowing arbitrary static networks in instances of RRP, we wish to force any such static network to come from a fixed family of networks where a \emph{family of networks} $\mathcal{H}$ is an infinite sequence of networks $\{H_i: i \geq 0\}$ so that the size $|H_i|$ of any $H_i$ is less than the size of $H_{i+1}$. However, we wish to control the sequence of network sizes. Consequently, we define a \emph{polynomial family of networks} as being a family of networks $\mathcal{H} = \{H_i: i\geq 0\}$ where there exists a polynomial $p_{\mathcal{H}}(x)$ so that $|H_{i+1}| = p_{\mathcal{H}}(|H_i|)$, for each $i\geq 0$\footnote{Technically, we insist that there exists a polynomial Turing machine $\mathcal M$ which computes $H_{i+1}$ on input $H_i$, for each $i \geq 0$, but this definition obfuscates the utility of this description.}. Note that given any $n\geq 0$, we can determine in time polynomial in $n$ the smallest $i$ such that $n\leq |H_i|$. As an example of a polynomial family of networks, consider the hypercubes; here, the polynomial $p_{\mathcal{H}}(x) = 2x$. Other examples include independent sets, complete graphs, cycles, complete binary trees and square grids, among many others. The sweeping generality of having a single construction which holds for any polynomial family $\mathcal H$ poses a challenge in our proofs of Theorems \ref{thm:2switchedRRP} and \ref{thm:3switchedRRP}; we require that our constructed network $H(S)$ behaves identically when $H$ is a connected (or even complete) graph and, at the opposite extreme, when $H$ is disconnected (or even independent).

\begin{theorem} \label{thm:2switchedRRP}
For any polynomial family of networks $\mathcal{H} = \{H_i: i \geq 0\}$, the problem \textsc{2-switched RRP} restricted to instances $(H, \mu, D, \kappa)$ satisfying:
\begin{itemize}
	\item $H\in\mathcal{H}$ has size $n$
	\item the workload matrix $D$ is sparse and all values in it are polynomial in $n$
	\item $\mu \in \mathrm\Theta(\frac{1}{poly(n)})$ is fixed for all dynamic links
\end{itemize}
is NP-complete.
\end{theorem}

\begin{proof}	
Note that \textsc{2-switched RRP} as in the statement of the theorem is in NP as it can be straightforwardly reduced to an equivalent instance of the more general \textsc{RRP}, which is known to be in NP. We now build a polynomial-time reduction from the problem \textsc{3-Min-Bisection} to our restricted version of RRP where the problem \textsc{3-Min-Bisection} is defined as follows:
\begin{itemize}
	\item instance of size $n$: a $3$-regular graph $G=(V,E)$ on $n$ nodes and an integer $k\leq n^2$
	\item yes-instance: there exists a partition of $V$ into two disjoint subsets $A$ and
	$B$, each of size $\frac{n}{2}$, so that the set of edges incident with both
	a node in $A$ and a node in $B$ has size at most $k$; that is, $G$ has \emph{bisection width} at most $k$.
\end{itemize}
Note that any $3$-regular graph necessarily has an even number of nodes. The problem \textsc{3-Min-Bisection} was proven to be NP-complete in \cite{BK02} where it was also shown that approximating \textsc{3-Min-Bisection} to within a constant factor approximation ratio entails the existence of a constant factor approximation algorithm for the more general and widely-studied problem \textsc{Min-Bisection}, which is defined as above but where $G$ can be arbitrary and where $A$ and $B$ have sizes differing by at most one. Given an arbitrary instance $(G=(V,E),k)$ of size $n$ of \textsc{3-Min-Bisection}, we now build an instance $(H, \mu, D, \kappa)$ of \textsc{2-switched RRP}. Moreover, we may assume that $k\leq \frac{n}{3}+46$ as it was proven in \cite{CE88} that every 3-regular graph has bisection width at most $\frac{n}{3}+46$.

First, we define a weighted digraph $D^\prime = (V^\prime, E^\prime)$ which will encode a description of our workload matrix $D$ via: there is a directed edge $(u,v)$ with weight $w$ if, and only if, there is a node-to-node workload of $w$ from $u$ to $v$. Let $\bar{n}$ be the size of the network $H_i$ where $i$ is the smallest integer such that $n+6n^2+2 \leq |H_i|$ and set $H=H_i$.

\begin{itemize}
	\item The node set $V^\prime$ is taken as a disjoint copy of the node set $V$ of $G$, which we also refer to as $V$, together with the set of nodes $V_c = \{x_i,y_i: -\frac{L}{2} \leq i \leq \frac{L}{2}\}$, where $L = 3n^2$ (recall, $n$ is even), and another set of nodes $U$ of size $\bar{n}-(n+6n^2+2)$; so, $|V^\prime|=\bar{n}$. We call every node of $V_c$ a \emph{chain-node}. For ease of presentation, we denote the chain-nodes $x_{\frac{L}{2}}$ and $x_{-\frac{L}{2}}$ by $x^+$ and $x^-$, respectively, and we define the chain-nodes $y^+$ and $y^-$ analogously.
	\item The (directed) edge set $E^\prime$ consists of $E_\alpha\cup E_\beta \cup E_1$ where:
	\begin{itemize}
		\item the set of \emph{chain-edges} $E_\alpha = \{(x_i,x_{i+1}), (y_i,y_{i+1}) : 0 \leq i < \frac{L}{2}\}\cup \{(x_i,x_{i-1}), (y_i,y_{i-1}) : -\frac{L}{2} < i \leq 0\}$
		\item the set of \emph{star-edges} $E_\beta = \{(x_0,v): v \in V\} \cup \{(y_0,v): v \in V\}$
		\item the set of \emph{unit-edges} $E_1$ which is a copy of the edges $E$ of $G$, but on our (copied) node set $V$ and so that every edge is replaced by a directed edge of arbitrary orientation.
	\end{itemize}
\end{itemize}
Note that the nodes of $U$ are all isolated in $D^\prime$ and that $|V^\prime| = \bar{n}$ (the nodes of $U$ will play no role in the following construction). The workloads on the edges of $E^\prime$ are $\alpha$, $\beta$ or $1$ depending upon whether the edge is a chain-edge from $E_\alpha$, a star-edge from $E_\beta$ or a unit-edge from $E_1$, respectively, where we define $\alpha = 24n^6$ and $\beta = 6n^3$. If the directed edge $(u,v)$ has weight $\alpha$ (resp. $\beta$, $1$) in $D^\prime$ then we say that $(u,v)$ is an $\alpha$-demand (resp. $\beta$-demand, $1$-demand). The digraph $D^\prime$ can be visualized as in Fig~\ref{fig:Dprime0}. The grey rectangle denotes the nodes of $V$, the nodes of $V_c$ appear along the top and the bottom and the dashed (resp. dotted, solid) directed edges depict the chain-edges (resp. star-edges, unit-edges). The nodes of $U$ are omitted.

\begin{figure*}
	\centering
	\scalebox{1}{
		\begin{tikzpicture}
			\node[state,circle,scale=0.3,fill=black] (x0) at (0,0) {};
			\node at (0,0.4) {$x_0$};
			\node[state,circle,scale=0.3,fill=black] (xmin1) at (-0.75,0) {};
			\node at (-0.75,0.4) {$x_{-1}$};
			\node[state,circle,scale=0.3,fill=black] (x1) at (0.75,0) {};
			\node at (0.75,0.4) {$x_1$};
			\node[state,circle,scale=0.3,fill=black] (xmin2) at (-1.5,0) {};
			\node at (-1.5,0.4) {$x_{-2}$};
			\node[state,circle,scale=0.3,fill=black] (x2) at (1.5,0) {};
			\node at (1.5,0.4) {$x_2$};
			\node[state,circle,scale=0.3,fill=black] (xmin) at (-4.5,0) {};
			\node at (-4.5,0.4) {$x^{-}=x_{\frac{-L}{2}}$};
			\node[state,circle,scale=0.3,fill=black] (xplus) at (4.5,0) {};
			\node at (4.5,0.4) {$x^{+}=x_{\frac{L}{2}}$};
			\path (x0) edge[->, dashed] (xmin1);
			\path (x0) edge[->, dashed] (x1);
			\path (xmin1) edge[->, dashed] (xmin2);
			\path (x1) edge[->, dashed] (x2);
			\path (xmin2) edge[->, dashed] (-2.25,0);
			\path (x2) edge[->, dashed] (2.25,0);
			\path (-3.75,0) edge[->, dashed] (xmin);
			\path (3.75,0) edge[->, dashed] (xplus);
			\node at (-3,0) {$\ldots$};
			\node at (3,0) {$\ldots$};
			\node[state,circle,scale=0.3,fill=black] (y0) at (0,-5) {};
			\node at (0,-5.4) {$y_0$};
			\node[state,circle,scale=0.3,fill=black] (ymin1) at (-0.75,-5) {};
			\node at (-0.75,-5.4) {$y_{-1}$};
			\node[state,circle,scale=0.3,fill=black] (y1) at (0.75,-5) {};
			\node at (0.75,-5.4) {$y_1$};
			\node[state,circle,scale=0.3,fill=black] (ymin2) at (-1.5,-5) {};
			\node at (-1.5,-5.4) {$y_{-2}$};
			\node[state,circle,scale=0.3,fill=black] (y2) at (1.5,-5) {};
			\node at (1.5,-5.4) {$y_2$};
			\node[state,circle,scale=0.3,fill=black] (ymin) at (-4.5,-5) {};
			\node at (-4.5,-5.4) {$y^{-}=y_{\frac{-L}{2}}$};
			\node[state,circle,scale=0.3,fill=black] (yplus) at (4.5,-5) {};
			\node at (4.5,-5.4) {$y^{+}=y_{\frac{L}{2}}$};
			\path (y0) edge[->, dashed] (ymin1);
			\path (y0) edge[->, dashed] (y1);
			\path (ymin1) edge[->, dashed] (ymin2);
			\path (y1) edge[->, dashed] (y2);
			\path (ymin2) edge[->, dashed] (-2.25,-5);
			\path (y2) edge[->, dashed] (2.25,-5);
			\path (-3.75,-5) edge[->, dashed] (ymin);
			\path (3.75,-5) edge[->, dashed] (yplus);
			\node at (-3,-5) {$\ldots$};
			\node at (3,-5) {$\ldots$};
			\node[state,rectangle,draw=white,scale=1,fill=lightgray, minimum width = 9.5cm, minimum height = 3.5cm] (rect1) at (0,-2.5) {};
			\node[state,circle,scale=0.3,fill=black] (u3) at (4,-1.5) {};
			\node[state,circle,scale=0.3,fill=black] (u5) at (0.75,-1.2) {};
			\node[state,circle,scale=0.3,fill=black] (u0) at (-4,-2.2) {};
			\node[state,circle,scale=0.3,fill=black] (u4) at (-0.5,-2.5) {};
			\node[state,circle,scale=0.3,fill=black] (u1) at (-2,-3) {};
			\node[state,circle,scale=0.3,fill=black] (u6) at (3,-3.2) {};
			\node[state,circle,scale=0.3,fill=black] (u2) at (0.9,-3.5) {};
			\path (x0) edge[->, densely dotted] (u0);
			\path (x0) edge[->, densely dotted] (u1);
			\path (x0) edge[->, densely dotted] (u2);
			\path (x0) edge[->, densely dotted] (u3);
			\path (x0) edge[->, densely dotted] (u4);
			\path (x0) edge[->, densely dotted] (u5);
			\path (x0) edge[->, densely dotted] (u6);
			\path (y0) edge[->, densely dotted] (u0);
			\path (y0) edge[->, densely dotted] (u1);
			\path (y0) edge[->, densely dotted] (u2);
			\path (y0) edge[->, densely dotted] (u3);
			\path (y0) edge[->, densely dotted] (u4);
			\path (y0) edge[->, densely dotted] (u5);
			\path (y0) edge[->, densely dotted] (u6);
			\path (u0) edge[->] (u1);
			\path (u1) edge[->] (u2);
			\path (u2) edge[->] (u3);
			\path (u3) edge[->] (u4);
			\path (u4) edge[->] (u5);
			\path (u6) edge[->] (u4);
			\path (u4) edge[->] (u0);
		\end{tikzpicture}	
	}
	\caption{The graph $D^\prime$.}
	\label{fig:Dprime0}
\end{figure*}

As stated earlier, our static network $H$ is the network $H_i\in \mathcal{H}$ where $|H_i|=\bar{n}$. We refer to the node set of $H$ as $V^\prime$ also and we refer to the subset of nodes within $V^\prime$ corresponding to $V$ as $V$ also. Since we are in the 2-swtiched setting, we have one switch $s$ with $2|V^\prime|$ ports so that every node of $H$ is adjacent, via switch links, to exactly two ports of the switch. Hence, our switch set is $S = \{s\}$ and our hybrid network is $H(S)$. It is important to note that for any configuration $N$, any node of $H(N)$ can be adjacent to at most $2$ other nodes via dynamic links (as $\Delta_S=2$).

As can be seen, we have the graph $G=(V,E)$, the digraph $D^\prime=(V^\prime,E^\prime)$ and the hybrid network $H(S)$ with node set $V^\prime$. Although $G$, $D^\prime$ and $H(S)$ are disjoint in terms of node sets, we do not distinguish between, say, the node set $V$ of $G$ and the subset of nodes $V$ of $H$. It should always be obvious as to which set we are referring to. We proceed similarly when we talk of specific nodes. As ever, we refer to `edges' in graphs and `links' in networks (but they are really one and the same).

We set the weight of any dynamic link as $\mu = \frac{1}{2L} = \frac{1}{6n^2}$ and the bound $\kappa$ for the total workload cost as $\kappa = \kappa_\alpha + \kappa_\beta + \kappa_1$ where:
\begin{itemize}
	\item $\kappa_\alpha = 24n^6$
	\item $\kappa_\beta = 3n^4+\frac{n^3}{2}+n^2$
	\item $\kappa_1 = \frac{k}{2} + \frac{1}{8}  - \frac{1}{4n} + \frac{k}{3n^2}.$
\end{itemize}
The values of $\kappa_\alpha$, $\kappa_\beta$ and $\kappa_1$ have the following significance. 
\begin{itemize}
	\item Suppose that for every chain-edge $(x_i,x_{i+1})$ (resp. $(x_i,x_{i-1})$, $(y_i,y_{i+1})$, $(y_i,$\linebreak$y_{i-1})$) of $E_\alpha$, we force a dynamic link joining $x_i$ and $x_{i+1}$ (resp. $x_i$ and $x_{i-1}$, $y_i$ and $y_{i+1}$, $y_i$ and $y_{i-1}$) in $H(N)$ and choose a corresponding flow-path serving this $\alpha$-demand as consisting of this dynamic link ($N$ is the resulting configuration from our chosen switch matching). The total workload cost of flow-paths serving $\alpha$-demands is $2L\alpha\mu = 24n^6 = \kappa_\alpha$.
	\item Further, suppose that the dynamic links incident with nodes of $V$ in $H(N)$ are chosen so that we have a path of dynamic links $p_A$ from $x^+$ to either $y^-$ or $y^+$, involving the subset of nodes $A\subseteq V$, and a path of dynamic links $p_B$ from $x^-$ to $y^+$ or $y^-$, respectively, involving the subset of nodes $B\subseteq V$, so that both $p_A$ and $p_B$ have length $\frac{n}{2}+1$. That is, we choose the dynamic links so that they form a cycle $C$ (of length $n+2L+2$) in $H(N)$ covering exactly the nodes of $V$ and $V_c$. Suppose that for any star-edge $(x_0,v)$ (resp. $(y_0,v)$) of $E_\beta$, we choose the flow-path in $H(N)$ serving this star-edge as consisting entirely of dynamic links resulting from the shortest path in our cycle $C$ from $x_0$ to $v$ (resp. $y_0$ to $v$). The total workload cost of flow-paths corresponding to the star-edges is
	$$4\mu\beta\sum_{i=1}^{\frac{n}{2}}(\frac{L}{2} + i) = 3n^4+\frac{n^3}{2}+n^2 = \kappa_\beta.$$
	\item Further, suppose we choose the flow-path in $H(N)$ serving the $1$-demand $(u,v)$ (in $E_1$) to be a path of dynamic links within the cycle $C$ of shortest length. If $u$ and $v$ both lie on $p_A$ or both lie on $p_B$ then the workload cost of this flow-path is at most $\mu(\frac{n}{2}-1) = \frac{1}{6n}(\frac{1}{2}-\frac{1}{n})$, and if one of $u$ and $v$ lies on $p_A$ with the other node lying on $p_B$ then the workload cost of this flow-path is at most $\mu(\frac{n}{2}+L+1) = \frac{1}{2} + \frac{1}{12n} + \frac{1}{6n^2}$. If the width of the bisection of $G$ formed by $A$ and $B$ is at most $k$ then the total workload cost of flow-paths corresponding to the unit-edges is at most
	$$(\frac{3n}{2} - k)\mu(\frac{n}{2}-1) + k\mu(\frac{n}{2}+L+1) = \frac{k}{2}+\frac{1}{8} - \frac{1}{4n} + \frac{k}{3n^2}  = \kappa_1.$$
\end{itemize}
From above, we immediately obtain that if $(G,k)$ is a yes-instance of \textsc{3-Min-Bisection} then $(H(S),D,\mu,\kappa)$ is a yes-instance of RRP.

Conversely, suppose that $(H,\mu, D,\kappa)$ is a yes-instance of RRP. Let $N$ be a configuration (consisting of a switch matching of $s$) and let $F$ be a collection of flow-paths that witness that the total workload cost for the workload matrix $D$ is at most $\kappa$. Denote by $N$, also, the sub-network of $H(N)$ consisting solely of dynamic links. W.l.o.g. we may assume that every node of $H(N)$ is incident with exactly two dynamic links of $N$ (by adding additional dynamic links that we do not actually use in any flow-path, if necessary).

\begin{claim}\label{clm:2switchedRRP1}
If $(u,v)\in E_\alpha$ (that is, $(u,v)$ is a chain-edge in $D^\prime$) then $(u,v)$ is a dynamic link in $N$; so, the total workload cost of the flow-paths serving $\alpha$-demands is exactly $\kappa_\alpha = 24n^6$.
\end{claim}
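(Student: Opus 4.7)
The proof strategy is a direct cost-accounting argument that exploits the fact that $\alpha$ is chosen so large that the per-demand surcharge from using anything other than a single dynamic link for a chain-edge already exceeds the entire remaining budget $\kappa_\beta + \kappa_1$. In other words, the parameters $\alpha = 24n^6$ and $\mu = \tfrac{1}{2L} = \tfrac{1}{6n^2}$ are carefully tuned so that $\mu\alpha = 4n^4$ dwarfs $\kappa_\beta + \kappa_1 = 3n^4 + O(n^3)$, which forces the chain-edges to be implemented by direct dynamic links.

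First, I would establish a sharp lower bound on the $\alpha$-demand contribution. Since every link of $H(N)$ has weight at least $\mu$, every flow-path serving a chain-edge demand has weight at least $\mu$, so that demand contributes at least $\mu\alpha$ to the total workload cost. Summing over all $2L$ chain-edges of $E_\alpha$ gives a lower bound of $2L\mu\alpha = \kappa_\alpha$, with equality \emph{if and only if} the flow-path for every chain-edge $(u,v)$ is a single dynamic link joining $u$ and $v$ in $H(N)$.

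Next, I would quantify the excess incurred by any deviation. Suppose either that some chain-edge $(u,v)$ is not a dynamic link of $N$, or that its flow-path uses more than one link. In the former case, any $u$-to-$v$ path in $H(N)$ has length at least two: if purely dynamic, its weight is at least $2\mu$, and if it uses a static link then its weight is at least $1 > 2\mu$. In the latter case the path-weight is trivially at least $2\mu$. Either way the flow-path weight is at least $2\mu$, so that single demand alone contributes at least $2\mu\alpha$, giving an excess of at least $\mu\alpha = 4n^4$ over the baseline $\kappa_\alpha$.

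Finally, combining these, a single deviation forces the total workload cost to be at least $\kappa_\alpha + 4n^4$, but a routine estimate using $k \leq n/3 + 46$ gives $\kappa_\beta + \kappa_1 < 4n^4$ for all sufficiently large $n$ (the dominant term of $\kappa_\beta + \kappa_1$ is $3n^4$). This contradicts the yes-instance assumption $\text{total cost} \leq \kappa = \kappa_\alpha + \kappa_\beta + \kappa_1$. Hence no deviation occurs: every chain-edge is a dynamic link in $N$, each corresponding flow-path consists of exactly that link, and the total $\alpha$-demand cost is exactly $\kappa_\alpha$. The only point requiring care is the routine verification that $\kappa_\beta + \kappa_1 < \mu\alpha$, which is the main obstacle and is precisely what motivates taking $\alpha$ as large as $24n^6$ in the construction.
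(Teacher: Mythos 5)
Your proof is correct and follows essentially the same route as the paper: a budget argument showing that any chain-edge demand not served by its own single dynamic link incurs cost at least $2\mu\alpha$, so the excess $\mu\alpha = 4n^4$ alone exceeds the remaining budget $\kappa_\beta + \kappa_1 = 3n^4 + O(n^3)$ for large $n$, contradicting total cost at most $\kappa$. (Your parenthetical ``path has length at least two'' is not needed---a static link $(u,v)$ of weight $1 > 2\mu$ could exist in $H$---but your weight case analysis already covers this, exactly as the paper's does.)
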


\begin{proof}
Suppose that $(x_i,x_{i+1})\not\in N$ (there is an analogous argument for each of $(x_i,x_{i-1})$, $(y_i,y_{i+1})$ and $(y_i,y_{i-1})$). So, the flow-path of $F$ serving the $\alpha$-demand $(x_i,x_{i+1})$ consists of at least two links and has workload cost at least $2\mu\alpha$ (as $2\mu$ is strictly less than $1$ which is the weight of a static link). Hence, the total workload cost of flow-paths corresponding to the chain-edges is at least $\kappa_\alpha + \mu\alpha$. Denote the total workload cost of flow-paths serving the $\beta$-demands (resp. $1$-demands) by $\bar{\kappa}_\beta$ (resp. $\bar{\kappa}_1$). So, $\kappa_\alpha + \mu\alpha + \bar{\kappa}_\beta + \bar{\kappa}_1 \leq \kappa$ with $\mu\alpha \leq \mu\alpha + \bar{\kappa}_\beta + \bar{\kappa}_1 \leq \kappa_\beta + \kappa_1$; that is, with $n^4 \leq \frac{n^3}{2}+n^2 + \frac{k}{2} + \frac{1}{8} -\frac{1}{4n} + \frac{k}{3n^2}$. This yields a contradiction (so long as $n$ is big enough) and the claim follows.
\qed \end{proof}

\begin{claim}\label{clm:2switchedRRP2}
The set of dynamic links $N$ forms a cycle in $H(N)$ covering exactly the nodes of $V\cup V_c$ and on which every link from $\{(x_i,x_{i+1}), (y_i,y_{i+1}) : 0 \leq i < \frac{L}{2}\}\cup \{(x_i,x_{i-1}), (y_i,y_{i-1}) : -\frac{L}{2} < i \leq 0\}$ lies. Moreover, every flow-path of $F$ serving a $\beta$-demand consists entirely of dynamic links.
\end{claim}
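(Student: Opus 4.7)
My plan rests on Claim~\ref{clm:2switchedRRP1} together with the w.l.o.g. assumption that every node has exactly two dynamic ports used, so $N$ is a $2$-regular graph, i.e., a disjoint union of cycles. By Claim~\ref{clm:2switchedRRP1} every chain link lies in $N$; interior chain nodes (and also $x_0, y_0$) have both dynamic ports saturated by chain links, while each of $x^\pm, y^\pm$ has exactly one remaining dynamic port. Thus the chain links form two subpaths $P_x, P_y$ sitting inside cycles $C_x, C_y$ of $N$, and the claim reduces to showing $C_x = C_y$ and that this common cycle covers exactly $V \cup V_c$.

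First I would rule out $C_x \neq C_y$: in that case $x_0 \in C_x$ has no dynamic path to any node off $C_x$, so every $\beta$-demand $x_0 \to v$ with $v \notin C_x$ must route via at least one static link, contributing at least $\beta = 6n^3$ to the cost; symmetrically for $y_0$ and $C_y$. With $a_x, a_y$ the numbers of $V$-nodes on $C_x, C_y$ respectively (and $a_x + a_y \leq n$), the total $\beta$-cost is at least $\beta(2n - a_x - a_y) \geq \beta n = 6n^4$, which strictly exceeds the budget $\kappa_\beta + \kappa_1 < 4n^4$ for large $n$. Hence $C_x = C_y =: C$; since $C$ contains both chains, it contains every chain link, establishing the second half of the structural assertion.

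Next I would show $C$ covers exactly $V \cup V_c$. Any $v \in V$ off $C$ forces a static link on both the $x_0 \to v$ and $y_0 \to v$ flow-paths, contributing at least $2\beta = 12n^3$ over and above the dynamic-only minimum for the remaining $\beta$-demands; this dwarfs the slack $\kappa_1 = O(n)$. For $U$-nodes on $C$, set $T = \sum_{v \in V}[d_C(x_0,v) + d_C(y_0,v)]$, the total dynamic-only cycle distance; the minimum dynamic-only $\beta$-cost is $\beta \mu T$, attaining $\kappa_\beta$ exactly when $C$ covers $V \cup V_c$ with no $U$-nodes (this is the target calculation outlined earlier in the proof). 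A case analysis shows that any additional $U$-node on $C$ raises $T$ by at least $n/2$: inserting a node at cycle-position $k$ increases $d_C(x_0, v)$ by exactly $1$ for those $v$ whose shortest $x_0$-to-$v$ path in the old cycle passes through $k$, and analogously for $y_0$; summing over the $n$ values of $v \in V$ and checking every non-chain position $k$ yields a uniform lower bound of $n/2$. Each extra $U$-node on $C$ thus raises the $\beta$-cost by at least $\beta \mu \cdot n/2 = n^2/2 \gg \kappa_1$, forcing no $U$-node on $C$.

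Finally, for the assertion that $\beta$-flow-paths are entirely dynamic: a single static link on any such flow-path would force its cost to at least $\beta = 6n^3$, while the dynamic-only minimum for that flow-path (the cycle distance, at most $L + n/2 + 1$) yields cost at most $\beta \mu (L + n/2 + 1) = 3n^3 + O(n^2)$. The excess of $\Omega(n^3)$ from even a single offending flow-path vastly exceeds $\kappa_1 = O(n)$ and contradicts the budget. The main technical obstacle will be the case analysis in the $U$-node step: establishing the $n/2$ lower bound on the change in $T$ uniformly over every possible insertion point in the non-chain portion of $C$, including the boundary cases where the inserted $U$ lies adjacent to one of $x^\pm, y^\pm$.
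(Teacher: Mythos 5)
Your opening steps are sound and essentially parallel the paper: $2$-regularity plus Claim~\ref{clm:2switchedRRP1} traps the two chains inside cycles, and budget comparisons against $\kappa_\beta+\kappa_1$ rule out $C_x\neq C_y$, rule out a node of $V$ off the cycle, and force $\beta$-flow-paths to be all-dynamic. The paper gets all of this from one cheaper computation: any dynamic path leaving $x_0$ or $y_0$ must first traverse half a chain, so an all-dynamic $\beta$-flow-path costs at least $\mu\beta(\frac{L}{2}+1)=\frac{3n^3}{2}+n$, while one containing a static link costs at least $\beta=6n^3$; hence a single static-using $\beta$-flow-path already gives total $\beta$-cost at least $(2n-1)(\frac{3n^3}{2}+n)+6n^3>\kappa_\beta+\kappa_1$, which simultaneously yields the ``entirely dynamic'' statement, connectivity, and the cycle. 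Note also that the paper does not in fact prove the ``exactly'' (no $U$-nodes) part inside the claim: it is dispatched immediately afterwards by a w.l.o.g.\ modification (deleting any $U$-nodes from $p_A,p_B$ only decreases the total workload cost).

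The genuine gap is precisely the step you flag as the main obstacle: the uniform bound ``each inserted $U$-node raises $T$ by at least $n/2$'' is false. For $v$ on $p_A$ one has $d_C(x_0,v)+d_C(y_0,v)=L+\ell_A$ independently of where $v$ sits on $p_A$ (one distance goes via $x^+$, the other via the $y$-end), and distances to nodes of $p_B$ are unaffected by changes to $p_A$; hence inserting a $U$-node into $p_A$ increases $T$ by exactly the number of $V$-nodes on $p_A$, which can be as small as $0$ (take all of $V$ on $p_B$ and $p_A$ the single link from $x^+$ to the $y$-chain). So the per-insertion argument collapses exactly when the $V$-nodes are split unevenly, and your final step is then circular, since its bound $d_C\le L+\frac{n}{2}+1$ on cycle distances presupposes the no-$U$, balanced conclusion (your parenthetical ``attaining $\kappa_\beta$'' also needs $p=\frac n2$, not just absence of $U$-nodes). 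The hole can be closed either the paper's way (prove only connectivity/all-dynamic in the claim and strip $U$-nodes afterwards by the cost-decreasing modification), or by a combined accounting: with $p$ nodes of $V$ and $q_A,q_B$ $U$-nodes on $p_A,p_B$, the all-dynamic $\beta$-cost is at least $n\bigl(nL+p(p+1)+(n-p)(n-p+1)+pq_A+(n-p)q_B\bigr)$ (for configurations not wildly over budget), and since $p(p+1)+(n-p)(n-p+1)\ge\frac{n^2}{2}+n$, the terms are integers and $\kappa_1<n$, staying within $\kappa_\beta+\kappa_1$ forces $p=\frac n2$ and $q_A=q_B=0$ simultaneously. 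As written, however, the uniform $n/2$ lemma on which your exclusion of $U$-nodes rests does not hold.
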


\begin{proof}
By Claim~\ref{clm:2switchedRRP1}, if $(u,v)\in E_\alpha$ then $(u,v)\in N$ and the total workload cost of the flow-paths serving the $\alpha$-demands is exactly $\kappa_\alpha = 24n^6$. Note that because every node of $H(N)$ is adjacent to at most $2$ dynamic links, there are no other dynamic links incident with a node from $V_c\setminus\{x^+,x^-,y^+,y^-\}$. 

Suppose that a flow-path of $F$ serving a $\beta$-demand consists entirely of dynamic links. So, the workload cost of such a flow-path is at least $\mu\beta(\frac{L}{2}+1) = \frac{3n^3}{2}+n$.
Alternatively, if a flow-path of $F$ serving a $\beta$-demand contains at least one static link then the workload cost of such a flow-path is at least $\beta = 6n^3$. 
Consequently, if at least one flow-path of $F$ serving a $\beta$-demand contains a static link (of weight $1$) then the total workload cost of flow-paths serving the $\beta$-demands is at least $(2n-1)\mu\beta(\frac{L}{2}+1) + \beta = (2n-1)(\frac{3n^3}{2}+n) + 6n^3 = 3n^4 +\frac{9n^3}{2}+2n^2 - n > 3n^4 + \frac{n^3}{2} + n^2 + \frac{k}{2} + \frac{1}{8}  - \frac{1}{4n} + \frac{k}{3n^2} = \kappa_\beta +\kappa_1 = \kappa - \kappa_\alpha$ which yields a contradiction. Thus, all flow-paths of $F$ serving a $\beta$-demand consist entirely of dynamic links. In particular, $N$ must be connected. As $N$ is regular of degree $2$ then $N$ must, in fact, be a cycle covering the nodes of $V\cup V_c$. The claim follows.\qed \end{proof}

By Claim~\ref{clm:2switchedRRP2}, $N$ consists of a path of dynamic links involving all links from $\{(x_i,x_{i+1}) : -\frac{L}{2} \leq i < \frac{L}{2}\}$ from $x^-$ to $x^+$ concatenated with a path $p_A$ of dynamic links from $x^+$ to either $y^-$ or $y^+$ concatenated with a path of dynamic links involving all links from $\{(y_i,y_{i+1}) : -\frac{L}{2} \leq i < \frac{L}{2}\}$ from $y^-$ or $y^+$ to $y^+$ or $y^-$ concatenated with a path $p_B$ of dynamic links from $y^+$ or $y^-$ to $x^-$, respectively. W.l.o.g. we may assume that $p_A$ runs from $x^+$ to $y^+$ and $p_B$ from $y^-$ to $x^-$.

We can now recalculate the total workload cost of the flow-paths serving the $\beta$-demands. Suppose that there are $p$ nodes of $V$ on $p_A$ and so $n-p$ nodes of $V$ on $p_B$. We may assume that no nodes of $U$ lie on $p_A$ or $p_B$: if any do, then removing all such nodes from those paths results in a strictly smaller total workload cost. The total workload cost of the flow-paths serving the $\beta$-demands is
\begin{eqnarray*}
	\lefteqn{2\mu\beta\sum_{i=1}^{p}(\frac{L}{2}+i) + 2\mu\beta\sum_{i=1}^{n-p}(\frac{L}{2}+i)}\\
	& = & p\mu\beta L + (n-p)\mu\beta L + 2\mu\beta\sum_{i=1}^{p}i + 2\mu\beta\sum_{i=1}^{n-p}i\\
	& = & n\mu\beta L + (p(p+1) + (n-p)(n-p+1))\mu\beta\\
	& = & 3n^4 + n(2p^2 -2np +n^2 + n)\\
	& \geq & 3n^4 + \frac{n^3}{2} + n^2 = \kappa_\beta \mbox{ (when $p$ takes the value $\frac{n}{2}$)}
\end{eqnarray*}
as the minimum value for $3n^4 + n(2p^2 -2np +n^2 + n)$ is when $p=\frac{n}{2}$.

Suppose that the paths $p_A$ and $p_B$ have different lengths. From above, the total workload cost of the flow-paths serving the $\beta$-demands is at least $3n^4 + n(2(\frac{n}{2} + 1)^2 -2n(\frac{n}{2} + 1) +n^2 + n) = 3n^4 + \frac{n^3}{2} + n^2 + 2n  = \kappa_\beta + 2n$. Consequently, we must have that $2n \leq \kappa_1 = \frac{k}{2} + \frac{1}{8}  - \frac{1}{4n} + \frac{k}{3n^2}$. As $k\leq \frac{n}{3}+46$ (from \cite{CE88}), we have that $2n\leq \frac{n}{6} + 23 + \frac{1}{8} - \frac{1}{4n} + \frac{1}{9n} + \frac{46}{3n^2}$ which yields a contradiction. Hence, the paths $p_A$ and $p_B$ each have length $\frac{n}{2}$, with the total workload cost of the flow-paths corresponding to the unit-edges being at most $\kappa_1$.

Let $(u,v)$ be a unit-edge. From above: if $u$ and $v$ both lie on $p_A$ or both lie on $p_B$ and the flow-path serving the $1$-demand $(u,v)$ consists entirely of dynamic links then the workload cost of this flow-path lies between $\mu$ and $\mu(\frac{n}{2}-1)$, i.e., $\frac{1}{6n^2}$ and $\frac{1}{6n}(\frac{1}{2} - \frac{1}{n})$; if $u$ and $v$ lie on $p_A$ and $p_B$, respectively, or vice versa, and the flow-path serving the $1$-demand $(u,v)$ consists entirely of dynamic links then the workload cost of this flow-path lies between $\mu(L+2)$ and $\mu(\frac{n}{2}+L+1)$, i.e., $\frac{1}{2} + \frac{1}{3n^2}$ and $\frac{1}{2}+\frac{1}{12n} +\frac{1}{6n^2}$; and if the flow-path serving the $1$-demand $(u,v)$ contains a static link then the workload cost of this flow-path is at least 1. In particular, we may assume that any flow-path corresponding to a unit-edge consists entirely of dynamic links.

Let $A$ (resp. $B$) be the nodes of $V$ that appear on $p_A$ (resp. $p_B$). So, $(A,B)$ is a bisection of $G$. Suppose that this bisection has width $k+\epsilon$, for some $\epsilon \geq 1$. So there are $k+\epsilon$ unit-edges whose corresponding work-flows contribute collectively at least $(k+\epsilon)\mu(L+2) \geq (k+1)(\frac{1}{2} + \frac{1}{3n^2}) = \frac{k}{2} + \frac{1}{2} + \frac{1}{3n^2} + \frac{k}{3n^2} > \frac{k}{2} + \frac{1}{8}  - \frac{1}{4n} + \frac{k}{3n^2} = \kappa_1$ to the total workload cost, which yields a contradiction. Consequently, $G$ has bisection width at most $k$ and we have that if $(H,\mu,D,\kappa)$ is a yes-instance of RRP then $(G,k)$ is a yes-instance of \textsc{3-Min-Bisection}. Our result follows as $(H,\mu,D,\kappa)$ can be constructed from $(G,k)$ in time polynomial in $n$.
\qed \end{proof}

This result significantly strengthens Theorems 4.1 and 4.2 from \cite{FPS19}: there, RRP($\mathrm\Delta_R \geq 2, \sigma=0$) is shown to be NP-complete when the static network is an independent set, and the proof does not enable us to restrict the workload matrix $D$ meaningfully. The main weakness of Theorem~\ref{thm:2switchedRRP} is its reliance on $\mu$ being a polynomial factor smaller than any static link weight. This is actually related to the fact that a connected 2-regular network, as is $G(N)$ when $G$ is an independent set and $\mathrm\Delta_S=2$, has diameter linear in the number of nodes $n$. 
A network of maximum degree 3, on the other hand, may have diameter logarithmic in $n$ (e.g., a complete binary tree has this property) and we indeed show NP-completeness of RRP($\mathrm\Delta_S=3$) when $\mu= \mathrm\Theta(\frac{1}{\log n})$.

\begin{theorem}\label{thm:3switchedRRP}
For any polynomial family of networks $\mathcal{H} = \{H_i: i \geq 0\}$, the problem \textsc{3-switched RRP} restricted to instances $(H, \mu, D, \kappa)$ satisfying:
\begin{itemize}
	\item $H\in\mathcal{H}$ has size $n$
	\item the workload matrix $D$ is sparse and all values in it are polynomial in $n$
	\item $\mu \in \mathrm\Theta(\frac{1}{\log(n)})$
\end{itemize}
is NP-complete.
\end{theorem}

\begin{proof}
\setcounter{claim}{0}
As in the case of Theorem \ref{thm:2switchedRRP}, membership of NP is straightforward. The problem \textsc{Restricted Exact Cover by 3-Sets (RXC3)} is defined as follows:
\begin{itemize}
	\item instance of size $n$: a finite set of \emph{elements} $\mathcal{X} = \{x_i: 1\leq i \leq 3n\}$, for some $n\geq 1$, and a collection $\mathcal{C}$ of $3$-element subsets of $\mathcal{X}$, called \emph{clauses}, where $\mathcal{C}=\{c_j: 1\leq j\leq 3n\}$ and where each $c_j=\{x^j_1,x^j_2,x^j_3\}$ so that every element of $\mathcal{X}$ appears in exactly three clauses of $\mathcal{C}$
	\item yes-instance: $\mathcal{C}$ contains an \emph{exact cover} $\mathcal{C}^\prime$ for $\mathcal{X}$; that is, a subset $\mathcal{C}^\prime \subseteq \mathcal{C}$ so that every element of $\mathcal{X}$ appears in exactly one clause of $\mathcal{C}^\prime$ (hence, $\mathcal{C}^\prime$ necessarily has size $n$).
\end{itemize}
The problem \textsc{RXC3} is NP-complete as was proven in Theorem~A.1 of \cite{Gon85}. Let $(\mathcal{X},\mathcal{C})$ be an instance of \textsc{RXC3} of size $n$. We will build an instance $(H, \mu,D,  \kappa)$ of \textsc{3-switched RRP} corresponding to $(\mathcal{X},\mathcal{C})$.
	
First, we define a weighted graph $D^\prime$ so as to describe the workload $D$. Having built $D^\prime$, we will orient every undirected edge so that: there is a directed edge $(u,v)$ with weight $w$ if, and only if, there is a node-to-node workload of $w$ from $u$ to $v$ (we still refer to the resulting digraph as $D^\prime$). Consequently, we will need one flow-path in our resulting hybrid network corresponding to each directed edge of $D^\prime = (V^\prime,E^\prime)$. Our construction proceeds in stages.
\begin{itemize}
	\item We begin with a complete binary tree $T$ of depth $d$ so that the number of leaves is $2^d$ with $2^{d-1} < n \leq 2^d$. If the leaves are $\{l_i: 1\leq i \leq 2^d\}$ then we colour: every leaf $l_i$, where $n<i$, white; every leaf $l_i$, where $1\leq i\leq n$, grey; and the root of $T$ grey. Throughout, if we do not specify the colour of a node as grey or white then it is coloured black. Note that $d=\lceil \log_2(n)\rceil$.
	\item For every $x_i\in \mathcal{X}$, we create a unique node $x_i$ and for every clause $c_j\in \mathcal{C}$, we create a unique $2$-path as follows: there are $3$ nodes $u_j$, $v_j$ and $w_j$ together with the edges $(u_j,v_j)$ and $(v_j,w_j)$ and the node $u_j$ is coloured grey.
	\item For each $1\leq j\leq 3n$, if the clause $c_j=\{x^j_1,x^j_2,x^j_3\}$ then there are edges $(x_1^j,u_j)$, $(x^j_2,w_j)$ and $(x_3^j, w_j)$.
	\item To every node coloured grey, we attach a unique gadget consisting of a $4$-clique with one of its edges subdivided by a node, denoted $z$, so that there is an edge joining the gray node and the node $z$; and to every node coloured white, we attach the same gadget except that we identify the white node and the node denoted $z$. These attachments can be visualized as in Fig.~\ref{fig:gadgetb} and Fig.~\ref{fig:gadgeta} where the white and gray nodes are as shown.
	\item There is an edge $(r,x_i)$, for all $1\leq i\leq 3n$: call these the \emph{root-edges}.
\end{itemize}
	
\begin{figure*}
	\centering
	\scalebox{1}{
			\subfloat[\label{fig:gadgetb}]{
			\begin{tikzpicture}
				\node[state,circle,scale=0.3,fill=white] (r) at (0,0) {};
				\node[state,circle,scale=0.3,fill=black] (a) at (2,1) {};
				\node[state,circle,scale=0.3,fill=black] (d) at (2,-1) {};
				\node[state,circle,scale=0.3,fill=black] (b) at (3,1) {};
				\node[state,circle,scale=0.3,fill=black] (c) at (3,-1) {};
				\path (r) edge[-] (a);
				\path (r) edge[-] (d);
				\path (a) edge[-] (b);
				\path (a) edge[-] (c);
				\path (d) edge[-] (b);
				\path (d) edge[-] (c);
				\path (b) edge[-] (c);
			\end{tikzpicture}
		}
	\hspace{0.5in}
\subfloat[\label{fig:gadgeta}]{
	\begin{tikzpicture}
		\node[state,circle,scale=0.3,fill=lightgray] (r) at (0,0) {};
		\node[state,circle,scale=0.3,fill=black] (z) at (2,0) {};
		\node at (1.6,-0.4) {$z$};
		\node[state,circle,scale=0.3,fill=black] (a) at (3,1) {};
		\node[state,circle,scale=0.3,fill=black] (d) at (3,-1) {};
		\node[state,circle,scale=0.3,fill=black] (b) at (4,1) {};
		\node[state,circle,scale=0.3,fill=black] (c) at (4,-1) {};
		\path (r) edge[-] (z);
		\path (z) edge[-] (a);
		\path (z) edge[-] (d);
		\path (a) edge[-] (b);
		\path (a) edge[-] (c);
		\path (d) edge[-] (b);
		\path (d) edge[-] (c);
		\path (b) edge[-] (c);
	\end{tikzpicture}
}

}
	\caption{Attaching the gadgets.}
	\label{fig:gadget}
\end{figure*}

This completes the construction of the graph $D^\prime$ which can be visualized as in Fig.~\ref{fig:Dprime} where the root-edges are depicted as dashed. Note that the number of nodes in $V^\prime$ is $N = 6(2^d) + 28n + 4$ and the number of edges in $E^\prime$ is $9(2^d)+43n+6$, with $3n$ of these edges root-edges. Let $D^{\prime\prime}$ be $D^\prime$ with the root-edges removed. Note that every node of $D^{\prime\prime}$ has degree $3$ except for the leaf nodes of $\{l_i: 1\leq i \leq n\}$ and the nodes of $\{v_j: 1\leq j \leq m\}$, all of which have degree $2$ (we return to this comment presently). Let $E^\prime = E^{\prime\prime}\cup E^{root}$ where $E^{root}$ is the set of root-edges (and so $E^{\prime\prime}$ is the edge set of $D^{\prime\prime}$). As regards the edge-weights, orient each of the edges of $E^\prime$ `down' the graph $D^\prime$ as it is portrayed in Fig.~\ref{fig:Dprime} so as to obtain a digraph and: give each directed edge of $E^{\prime\prime}$ weight $\alpha$, where $\alpha = 4n\log_2(n)$; and give each directed root-edge of the form $(r,x_i)$ weight $1$.

\begin{figure*}
	\centering
	\scalebox{1}{
		\begin{tikzpicture}
				\node[state,circle,scale=0.3,fill=lightgray] (r) at (0,0) {};
				\node at (0,0.4) {$root$};
				\node[state,circle,scale=0.3,fill=black] (c11) at (-0.75,-0.5) {};
				\node[state,circle,scale=0.3,fill=black] (c12) at (0.75,-0.5) {};
				\node[state,circle,scale=0.3,fill=black] (c21) at (-1.25,-1) {};
				\node[state,circle,scale=0.3,fill=black] (c22) at (-0.25,-1) {};
				\node[state,circle,scale=0.3,fill=black] (c23) at (0.25,-1) {};
				\node[state,circle,scale=0.3,fill=black] (c24) at (1.25,-1) {};
				\node[state,circle,scale=0.3,fill=black] (c31) at (-2.5,-2) {};
				\node[state,circle,scale=0.3,fill=black] (c32) at (-1.25,-2) {};
				\node[state,circle,scale=0.3,fill=black] (c33) at (0.25,-2) {};
				\node[state,circle,scale=0.3,fill=black] (c34) at (2.5,-2) {};
				\node[state,circle,scale=0.3,fill=lightgray] (c41) at (-2.75,-2.5) {};
				\node[state,circle,scale=0.3,fill=lightgray] (c42) at (-2.25,-2.5) {};
				\node[state,circle,scale=0.3,fill=lightgray] (c43) at (-1.5,-2.5) {};
				\node[state,circle,scale=0.3,fill=lightgray] (c44) at (-1,-2.5) {};
				\node[state,circle,scale=0.3,fill=lightgray] (c45) at (0,-2.5) {};
				\node[state,circle,scale=0.3,fill=white] (c46) at (0.5,-2.5) {};
				\node[state,circle,scale=0.3,fill=white] (c47) at (2.25,-2.5) {};
				\node[state,circle,scale=0.3,fill=white] (c48) at (2.75,-2.5) {};
				\node[state,circle,scale=0.3,fill=black] (v1) at (-3.25,-4) {};
				\node[state,circle,scale=0.3,fill=black] (v2) at (-1.75,-4) {};
				\node[state,circle,scale=0.3,fill=black] (v3) at (0.75,-4) {};
				\node[state,circle,scale=0.3,fill=black] (v4) at (3.25,-4) {};
				\node[state,circle,scale=0.3,fill=lightgray] (u1) at (-3.65,-4.5) {};
				\node[state,circle,scale=0.3,fill=black] (w1) at (-2.85,-4.5) {};
				\node[state,circle,scale=0.3,fill=lightgray] (u2) at (-2.15,-4.5) {};
				\node[state,circle,scale=0.3,fill=black] (w2) at (-1.35,-4.5) {};
				\node[state,circle,scale=0.3,fill=lightgray] (u3) at (0.35,-4.5) {};
				\node[state,circle,scale=0.3,fill=black] (w3) at (1.15,-4.5) {};
				\node[state,circle,scale=0.3,fill=lightgray] (u4) at (2.85,-4.5) {};
				\node[state,circle,scale=0.3,fill=black] (w4) at (3.65,-4.5) {};
				\node[state,circle,scale=0.3,fill=black] (x1) at (-3.25,-6) {};
				\node[state,circle,scale=0.3,fill=black] (x2) at (-1.75,-6) {};
				\node[state,circle,scale=0.3,fill=black] (x3) at (-0.25,-6) {};
				\node[state,circle,scale=0.3,fill=black] (x4) at (3.25,-6) {};
				\path (r) edge[-] (c11);
				\path (r) edge[-] (c12);
				\path (c11) edge[-] (c21);
				\path (c11) edge[-] (c22);
				\path (c12) edge[-] (c23);
				\path (c12) edge[-] (c24);
				\path (c31) edge[-] (c41);
				\path (c31) edge[-] (c42);
				\path (c32) edge[-] (c43);
				\path (c32) edge[-] (c44);
				\path (c33) edge[-] (c45);
				\path (c33) edge[-] (c46);
				\path (c34) edge[-] (c47);
				\path (c34) edge[-] (c48);
				\node at (-1.5,-1.5) {$\ldots$};
				\node at (-0.5,-2.25) {$\ldots$};
				\node at (1.38,-2.25) {$\ldots$};
				\node at (-0.5,-4.25) {$\ldots$};
				\node at (2,-4.25) {$\ldots$};
				\node at (1.5,-1.5) {$\ldots$};
				\node at (-1,-6) {$\ldots$};
				\node at (1.5,-6) {$\ldots$};
				\path (u1) edge[-] (v1);
				\path (v1) edge[-] (w1);
				\path (u2) edge[-] (v2);
				\path (v2) edge[-] (w2);
				\path (u3) edge[-] (v3);
				\path (v3) edge[-] (w3);
				\path (u4) edge[-] (v4);
				\path (v4) edge[-] (w4);
				\node at (-2.75,-3) {$l_1$};
				\node at (-2.25,-3) {$l_2$};
				\node at (-1.5,-3) {$l_3$};
				\node at (-1,-3) {$l_4$};
				\node at (0,-3) {$l_n$};
				\node at (0.75,-3) {$l_{n+1}$};
				\node at (2.1,-3) {$l_{2^d-1}$};
				\node at (3.15,-3) {$l_{2^d}$};
				\node at (-3.25,-3.6) {$v_1$};
				\node at (-1.75,-3.6) {$v_2$};
				\node at (0.75,-3.6) {$v_j$};
				\node at (3.25,-3.6) {$v_{3n}$};
				\node at (-3.7,-4.1) {$u_1$};
				\node at (-2.8,-4.1) {$w_1$};
				\node at (-2.2,-4.1) {$u_2$};
				\node at (-1.3,-4.1) {$w_2$};
				\node at (0.3,-4.1) {$u_3$};
				\node at (1.2,-4.1) {$w_3$};
				\node at (2.8,-4.1) {$u_{3n}$};
				\node at (3.7,-4.1) {$w_{3n}$};
				\node at (-3.25,-6.5) {$x_1$};
				\node at (-1.75,-6.5) {$x_2$};
				\node at (-0.25,-6.5) {$x_i$};
				\node at (3.25,-6.5) {$x_{3n}$};
				\draw[dashed, -] (x1) to[out=180,in=180] (r);
				\path (-2.3,-6.2) edge[dashed, -] (x2);
				\path (-0.8,-6.2) edge[dashed, -] (x3);
				\path (r) edge[dashed, -] (-0.6,0.15);
				\path (r) edge[dashed, -] (0.6,0.15);
				\draw[dashed, -] (x4) to[out=0,in=0] (r);
				\draw[<->] (4.5,0) to (4.5,-2.5);
				\node at (4.8,-1.25) {$T$};
				\path (x1) edge[-] (-3.25,-5.5);
				\path (x1) edge[-] (-3.1,-5.5);
				\path (x1) edge[-] (-3.4,-5.5);
				\path (x2) edge[-] (-1.75,-5.5);
				\path (x2) edge[-] (-1.6,-5.5);
				\path (x2) edge[-] (-1.9,-5.5);
				\path (x3) edge[-] (-0.25,-5.5);
				\path (x3) edge[-] (-0.1,-5.5);
				\path (x3) edge[-] (-0.4,-5.5);
				\path (x4) edge[-] (3.25,-5.5);
				\path (x4) edge[-] (3.1,-5.5);
				\path (x4) edge[-] (3.4,-5.5);
				\path (u1) edge[-] (-3.65,-5);
				\path (w1) edge[-] (-3,-5);
				\path (w1) edge[-] (-2.7,-5);
				\path (u2) edge[-] (-2.15,-5);
				\path (w2) edge[-] (-1.2,-5);
				\path (w2) edge[-] (-1.5,-5);
				\path (u3) edge[-] (0.35,-5);
				\path (w3) edge[-] (1.0,-5);
				\path (w3) edge[-] (1.3,-5);
				\path (u4) edge[-] (2.85,-5);
				\path (w4) edge[-] (3.8,-5);
				\path (w4) edge[-] (3.5,-5);
			\end{tikzpicture}
		}
		
	\caption{The graph $D^\prime$.}
	\label{fig:Dprime}
\end{figure*}

Let $\bar{n}$ be the size of the network $H_i$ where $i$ is the smallest integer such that $N \leq |H_i|$ and we fix our static network as $H=H_i$. We name a subset of nodes of $H$ as $V^\prime$ with the remaining nodes named $U$. $\mathrm \Delta_S=3$, so we have one switch $s$ that has $3n$ switch ports and every node of $H$ has $3$ external ports; so, every node of $H$ has exactly $3$ switch links to $s$, and $S = \{s\}$. Finally, we define $\mu=\frac{1}{2\lceil \log_2(n)\rceil}$ and $\kappa=\mu\alpha|E^{\prime\prime}| + 3\mu n(\lceil \log_2(n)\rceil + 3)$ so as to complete the construction of our instance $(H,\mu,D,\kappa)$ of \textsc{3-switched RRP}.

Suppose that $(\mathcal{X},\mathcal{C})$ is a yes-instance of \textsc{RXC3}. So, let $\mathcal{C}^\prime \subseteq \mathcal{C}$ be an exact cover of $\mathcal{X}$. For every $(u,v)\in E^{\prime\prime}$, we choose $(u,v)$ to be a dynamic link and ensure that every node of $\{v_j: 1\leq j\leq 3n, c_j\in \mathcal{C}^{\prime}\}$ is joined via a dynamic link to a leaf node of $\{l_i: 1\leq i\leq n\}$. If $v_j$ is such that $c_j\in \mathcal{C}^\prime$ then denote the unique leaf node to which there is a dynamic link from $v_j$ by $\bar{l}_j$. Denote the resulting switch matching of $s$ as constituting the configuration $N$. Referring back to our comment on node degrees above, the dynamic links so added must form a matching in $H(N)$. For any workload of weight $\alpha$ in $D$, we choose the corresponding flow-path to consist of the dynamic link joining the two nodes in question. Consequently, for all of the workloads corresponding to directed edges of $E^{\prime\prime}$, the total workload cost of the corresponding flow-paths is $\mu\alpha|E^{\prime\prime}|$. All of the remaining workloads involve the root and the nodes of $\{x_i: 1 \leq i \leq n\}$. Any one of these nodes $x_i$ is such that the element $x_i$ in a unique clause $c_j\in \mathcal{C}^\prime$. We choose the flow-path consisting entirely of dynamic links that routes from the root down the tree $T$ to the leaf-node $\bar{l}_j$ and on to $v_j$ and, through either $u_j$ or $w_j$, on to $x_i$. This flow-path has (graph-theoretic) length $d+3$ and consequently the total workload cost due to flow-paths corresponding to edges of $E^{root}$ is exactly $3\mu n(d+3) = 3\mu n(\lceil \log_2(n)\rceil + 3)$. Hence, the total workload cost due to all flow-paths is $\mu\alpha|E^{\prime\prime}| + 3\mu n(\lceil \log_2(n)\rceil + 3)=\kappa$ and $(H,\mu,D,\kappa)$ is a yes-instance of \textsc{3-switched RRP}.

Conversely, suppose that $(H,\mu,D,\kappa)$ is a yes-instance of \textsc{3-switched RRP} and that $N$ is a configuration and $F$ a set of flow-paths witnessing that the total workload cost is at most $\kappa$.

\begin{claim}\label{clm:3switched1}
It is necessarily the case that every directed edge $(u,v)\in E^{\prime\prime}$ is such that $(u,v)$ is a dynamic link in $H(N)$.
\end{claim}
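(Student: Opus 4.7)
The plan is to carry out essentially the same argument as in Claim~\ref{clm:2switchedRRP1}, exploiting the very large weight $\alpha = 4n\log_2(n)$ attached to every edge of $E^{\prime\prime}$ compared with the budget that $\kappa$ leaves for serving the root-edge demands. Suppose, for contradiction, that some $(u,v)\in E^{\prime\prime}$ is not a dynamic link of $H(N)$. Note that every link of $H(N)$ has weight at least $\mu$: dynamic links have weight exactly $\mu$, and static links have weight $1$, which (for $n\geq 2$) satisfies $1\geq 2\mu = 1/\lceil\log_2 n\rceil$. Hence any flow-path in $H(N)$ serving the $\alpha$-demand $(u,v)$ either consists of a single static link (cost $\alpha$) or uses at least two links (cost at least $2\mu\alpha$); in either case its workload cost is at least $2\mu\alpha$.

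First, I would lower-bound the contribution of every remaining $\alpha$-demand by $\mu\alpha$ (since any flow-path has weight at least $\mu$), and discard the non-negative contributions of the $3n$ root-edge demands. This yields a total workload cost of at least
$$\mu\alpha\,(|E^{\prime\prime}|-1) + 2\mu\alpha \;=\; \mu\alpha\,|E^{\prime\prime}| + \mu\alpha.$$
Since $(H(S),D,\mu,\kappa)$ is a yes-instance, this total is at most $\kappa = \mu\alpha\,|E^{\prime\prime}| + 3\mu n(\lceil\log_2 n\rceil + 3)$, so we would need $\mu\alpha \leq 3\mu n(\lceil\log_2 n\rceil + 3)$; equivalently $4\log_2 n \leq 3\lceil\log_2 n\rceil + 9$. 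Using $\lceil\log_2 n\rceil \leq \log_2 n + 1$, this forces $\log_2 n \leq 12$, which fails for $n$ sufficiently large (say $n > 2^{12}$). Since instances of \textsc{RXC3} of bounded size can be solved in constant time, we may assume $n$ is large enough to reach the desired contradiction.

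The main obstacle is handling uniformly the two possibilities for an alternative path --- a single static link of weight $1$, versus a multi-link path in $H(N)$ potentially mixing static and dynamic edges --- without needing any structural information about $H$. The observation that makes this easy is that $\mu \leq 1/2$, so the smallest possible weight of any path of length $\geq 2$ in $H(N)$ is $2\mu \leq 1$, and a single alternative link, being necessarily static, costs at least $1 \geq 2\mu$. Thus the lower bound of $2\mu\alpha$ on the cost of serving $(u,v)$ holds without further case analysis, and the asymptotic gap between $\mu\alpha = \Theta(n)$ and $3\mu n(\lceil\log_2 n\rceil + 3) = \Theta(n)$ (with a smaller leading constant) delivers the contradiction.
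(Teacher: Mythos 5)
Your proof is correct and follows essentially the same route as the paper's own argument: bound the cost of serving the missing $\alpha$-demand below by $2\mu\alpha$ (using $2\mu\leq 1$, so a single static link is no cheaper), bound every other $\alpha$-demand by $\mu\alpha$, discard the root-edge contributions, and compare with $\kappa$ to force $4\log_2 n\leq 3\lceil\log_2 n\rceil+9$, which fails for large $n$. Your explicit threshold $n>2^{12}$ (and the remark about solving small RXC3 instances directly) is in fact slightly more careful than the paper's stated bound of $n\geq 2^{10}$.
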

	
\begin{proof}
Suppose, for contradiction, that there is a directed edge $(u,v)\in E^{\prime\prime}$ so that $(u,v)$ is not a dynamic link in $N$. By construction, the node-to-node workload of $\alpha$ from $u$ to $v$ contributes a cost of at least $2\mu\alpha$ to the total workload cost (note that the cost of a static link is $1$ and $2\mu < 1$ when $n\geq 3$). So, the total workload cost from flow-paths corresponding to directed edges of $E^{\prime\prime}$ is at least $\mu\alpha(|E^{\prime\prime}| + 1)$. However, $\kappa=\mu\alpha|E^{\prime\prime}| + 3\mu n(\lceil \log_2(n)\rceil + 3)$ and so we must have that $\mu\alpha = 4\mu n\log_2(n) \leq 3\mu n(\lceil \log_2(n)\rceil + 3)$ which yields a contradiction when $n\geq 2^{10}$. The claim follows.
\qed \end{proof}
	
By Claim~\ref{clm:3switched1}, the total workload cost of flow-paths corresponding to the directed edges of $E^{\prime\prime}$ is exactly $\mu\alpha|E^{\prime\prime}|$. Consequently, the total workload cost of flow-paths corresponding to the directed edges of $E^{root}$ must be at most $3\mu n(\lceil \log_2(n)\rceil + 3)$.

Given the set of dynamic links as supplied by Claim~\ref{clm:3switched1}, the only other possible dynamic links we might have involve the leaf nodes $\{l_i: 1\leq i \leq n\}$, the nodes of $\{v_i: 1\leq i\leq 3n\}$ and nodes of $U$. If we are looking for a path of dynamic links joining $r$ and any node $x_i$ then no matter how we might extend the set of dynamic links corresponding to the directed edges of $E^{\prime\prime}$, we can never obtain a path of (graph-theoretic) length less than $d+3$: a path of length $d$ down the tree $T$ from $r$ to a leaf $l$ followed by a path of length $3$ of the form $(l,v_j,u_j,x_i)$ or $(l,v_j,w_j,x_i)$. 

We may assume $U$ is not involved in any such path. Any path visiting some node $\bar{u}\in U$ can be shortened to a path of the former form by simply removing $\bar{u}$ from the path. This corresponds to a rewiring of the configuration $N$ which decreases the total workload cost
so, for example, the path $(r, \ldots, l,\bar{u},v_j,u_j,x_i)$  of length $n+4$ can be shortened to $(r, \ldots, l,v_j,u_j,x_i)$ of length $n+3$ by removing the dynamic links $(l, \bar{u})$ and $(\bar{u}, v_j)$ from $N$ and adding the dynamic link $(l, v_j)$ to $N$. Clearly this does not increase the workload cost for any demand.

So, any flow-path corresponding to some directed edge $(r,x_i)\in E^{root}$ and consisting entirely of dynamic links has workload cost at least $\mu(d+3)  =\mu(\lceil \log_2(n)\rceil+3)$. If a flow-path corresponding to some directed edge $(r,x_i)\in E^{root}$ contains a static link then the workload cost of this flow-path is at least $1$ and $1 > \frac{1}{2} + \frac{3}{2\lceil \log_2(n)\rceil} = \mu(\lceil \log_2(n)\rceil+3)$, when $n \geq 2^4$. As there are $3n$ flow-paths corresponding to directed edges of $E^{root}$, we must have that every flow-path corresponding to a directed edge of $E^{root}$ consists entirely of dynamic links and has (graph-theoretic) length $\lceil \log_2(n)\rceil+3$.

Consider the flow-path corresponding to the directed edge $(r,x_i)\in E^{root}$. Suppose that the dynamic link $(l_k,v_j)$ appears on this flow-path (exactly one such dynamic link does). Note that there are at most $n$ such dynamic links used in the flow-paths of $F$ corresponding to the directed edges of $E^{root}$. Build the set of clauses $\mathcal{C}^\prime \subseteq \mathcal{C}$ by including every such $c_j$ in $\mathcal{C}^\prime$. Consequently, we have a subset $\mathcal{C}^\prime \subseteq \mathcal{C}$ of at most $n$ clauses. Moreover, as we can reach any $x_i$ from $r$ by a flow-path of $d+3$ dynamic links, every $x_i\in \mathcal{X}$ appears in a clause of $\mathcal{C}^\prime$; that is, $|\mathcal{C}^\prime| = n$, $\mathcal{C}^\prime$ is an exact cover for $\mathcal{X}$ and $(\mathcal{X},\mathcal{C})$ is a yes-instance of RXC3. As our instance $(H,\mu,D,\kappa)$ can be constructed from $(\mathcal{X},\mathcal{C})$ in time polynomial in $n$, our result follows.
\qed \end{proof}

These results led us to consider the problem \textsc{1-switched RRP($\sigma=k$)} for $k \ge 0$. By extending Lemma \ref{lem:FPS19-ptime} due to \cite{FPS19}, we show that this restriction leads to a tractable problem when either $\sigma=0$ or the static network is a complete graph, in contrast with our NP-completeness results. That result makes use of the \emph{dynamic link limit} $\delta$ - recall that when $\delta=1$ every flow-path must contain at most one dynamic link. In combination with the constraint $\sigma=0$, this entails that every flow path consists of either a single dynamic link (and no static links), or of one or more static links (and no dynamic links). The result below is proven via a maximum matching argument in \cite{FPS19}.

\begin{lemma}[\mbox{\cite[Theorem 3.1]{FPS19}}]\label{lem:FPS19-ptime}
	The problem RRP($\sigma=0$, $\delta=1$) can be solved in polynomial-time when there is only one switch. Moreover, the problem remains solvable in polynomial-time even if we allow non-uniform weights for dynamic links.
\end{lemma}

\begin{theorem}\label{thm:sigma0_poly}
    \textsc{$1$-switched RRP($\sigma=0$)} is in P. 
\end{theorem}

\begin{proof}
Note that this proof holds even if we allow variable weights for static and optic links. Because each node $v$ is connected to the switch exactly once, it is not possible that any vertex is incident to two optic links, and hence impossible for there to be any path consisting of two or more optic links. Consequently setting $\delta=1$ introduces no new constraints; in this scenario a flow is permissible with $\sigma=0$ if and only if it permissible with $\sigma=0$ and $\delta=1$. Hence the input instance of \textsc{1-switched RRP($\sigma=0, \delta=1$)} is a yes-instance if and only the corresponding instance of \textsc{RRP($\sigma=0, \delta=1$)} is a yes-instance, and tractability follows from Lemma \ref{lem:FPS19-ptime}.
\qed \end{proof}

\begin{corollary}\label{cor:cliqueP}
    \textsc{$1$-switched RRP($\sigma=k$)} restricted to instances where the static network $G$ is a complete graph is in P, for any $k \in \mathbb N \cup \{\infty\}$.
\end{corollary}

\begin{proof}

If $G$ is a complete graph with all edge weights equal (without loss take $wt(u,v)=1~\forall (u,v) \in E$) then without loss under any configuration $N$, each demand $D[u,v]$ is routed via the flow-path $\varphi(u,v)$ of minimum weight, which is either:
\begin{itemize}
    \item a single static link from $u$ to $v$ with unit weight. 
    \item a single dynamic link from $u$ to $v$ with weight $\mu$.
\end{itemize}

It follows that setting $\sigma=0$ introduces no new constraints, and then by Theorem \ref{thm:sigma0_poly} we have tractability.  \qed \end{proof}

Corollary \ref{cor:cliqueP} rules out the possibility that \textsc{1-switched RRP} might be NP-complete for any polynomial graph family $\mathcal H$ (since such a claim would extend to the family of complete graphs) unless P equals NP. This leaves open the practically relevant case where $\mathrm\Delta_S=1$ and $\sigma>0$ for \emph{specific} topologies. We consequently consider the scenario where the static network is a hypercube and the segregation parameter $\sigma=3$. 

\begin{theorem}\label{thm:hypercubeRRPsigma3}
	For any fixed $\mu \in (0,1)$, the problem \textsc{1-switched RRP($\sigma = 3$)} restricted to instances $(H, \mu, D, \kappa)$ satisfying:
	\begin{itemize}
		\item $H\in \mathcal Q$, where $\mathcal{Q}:= \{Q_d | d \in \mathbb N\}$ is the family of hypercubes
		\item the workload matrix $D$ is sparse and all values in it are polynomial in $n$
	\end{itemize}
	is NP-complete.
\end{theorem}

\begin{proof}\setcounter{claim}{0}
	Note that RRP as in the statement of the theorem is in NP on account of the numerical restrictions imposed. As in the proof of Theorem~\ref{thm:3switchedRRP}, we reduce from the problem RXC3. Let $(\mathcal{X},\mathcal{C})$ be an instance of RXC3 of size $n$. W.l.o.g., we may assume that $n$ is even. Let $m=\lceil \log_2(3n)\rceil$ (so, $3n \leq 2^m$). Our static network $H$ is the hypercube $Q_{8m}$ (so, $|Q_{8m}| = 2^{8m} = O(n^8)$) and there is exactly $1$ switch link from every node to our switch $s$.
	
	We begin by defining a weighted digraph $D^\prime = (V,E^\prime)$ where $V$ is the node set of our hypercube $Q_{8m}$. The digraph $D^\prime$ is to provide a description of our workload matrix. In order to define the directed edges of $D^\prime$, we define some specific subsets of nodes of $V$. We explain the notation that we use as we proceed. By `distance' we mean the distance (that is, shortest path length) in the hypercube $Q_{8m}$ and by $V_i$ (resp. $V_{\leq i}$) we mean the subset of nodes of $V$ that are at distance exactly $i$ (resp. at most $i$) from the \emph{root node} $r=(1, 1, \ldots, 1) = 1^{8m}$ (in general, we also denote nodes of $V$ as bit-strings of length $8m$).
	\begin{itemize}
		\item Within the nodes of $V_m$, we define the subset $\tilde{P}=\{z\bar{z}1^{6m}: z\in\{0,1\}^m\}$ (in general: $yz$ denotes the concatenation of two bit-strings $y$ and $z$, with $z^i$ denoting the concatenation of $i$ copies of the bit-string $z$; and $\bar{z}$ denotes the complement of the bit-string $z$). Note that $|\tilde{P}|=2^m$.
		\item Within the nodes of $V_{3m}$, we define the subset $\tilde{C} = \{(z\bar{z})^31^{2m}: z\in\{0,1\}^m\}$. Note that $|\tilde{C}|=2^m$.
		\begin{itemize}
			\item For any bit-string $z\in\{0,1\}^{8m}$ and for any bit-string $y\in\{0,1\}^i$ where $i < 8m$, we write $\oplus(z;y)$ to denote the bitwise exclusive OR of $z$ and $0^{8m-i}y$ (so, $\oplus(z;y)$ differs from $z$ on at most the last $i$ bits). For any set $U$ of bit-strings of length $8m$ and bit-string $y$ of length $i < 8m$, we define $U^{\oplus y} = \{\oplus(z;y): z\in U\}$.
			\item For any $c\in \tilde{C}$, we define
			\begin{itemize}
				\item $c^{\oplus 001} = \oplus(c; 001)$
				\item $c^{\oplus 010} = \oplus(c; 010)$
				\item $c^{\oplus 100} = \oplus(c; 100)$
			\end{itemize}
			and so we obtain sets of nodes $\tilde{C}^{\oplus 001}$, $\tilde{C}^{\oplus 010}$ and $\tilde{C}^{\oplus 100}$, each consisting of sets of neighbours of nodes in $\tilde{C}$ (note also that if $c_1,c_2\in \tilde{C}$ are distinct then $c_1$ and $c_2$ are at distance at least $6$ in $Q_{8m}$ and any node of $\{c_1^{\oplus 001}, c_1^{\oplus 010}, c_1^{\oplus 100}\}$ and node of $\{c_2^{\oplus 001}, c_2^{\oplus 010}, c_2^{\oplus 100}\}$ are at distance at least $6$ in $Q_{8m}$).
		\end{itemize} 
		\item Within the nodes of $V_{4m}$, we define the subset $\tilde{X} = \{(z\bar{z})^4: z\in\{0,1\}^m\}$. Note that $|\tilde{X}|=2^m$.
		\begin{itemize}
			\item For any $x\in \tilde{X}$, we define
			\begin{itemize}
				\item $x^{\oplus 01} = \oplus(x; 01)$
				\item $x^{\oplus 10} = \oplus(x; 10)$
			\end{itemize}
		and so we obtain sets of nodes $\tilde{X}^{\oplus 01}$ and $\tilde{X}^{\oplus 10}$, each consisting of sets of neighbours of nodes in $\tilde{X}$ (note that if $x_1,x_2\in \tilde{X}$ are distinct then $x_1$ and $x_2$ are at distance at least $8$ in $Q_{8m}$ and any node of $\{x_1^{\oplus 01}, x_1^{\oplus 10}\}$ and node of $\{x_2^{\oplus 01}, x_2^{\oplus 10}\}$ are at distance at least $6$ in $Q_{8m}$).
		\end{itemize}
	\end{itemize}

	We define the weighted directed edges $E^\prime$ of $D^\prime$ as follows. There are three weights: some $\beta > \max{\{\frac{3n(m+1+2\mu)}{\mu},\frac{3n(m+1+2\mu)}{1-\mu}\}}$; some $\alpha > 15n\beta + 9n^2 + 9n$; and 1. The directed edges of $E_\beta$ (resp. $E_\alpha$, $E_1$) all have weight $\beta$ (resp. $\alpha$, $1$) and $E^\prime = E_\beta\cup E_\alpha \cup E_1$.
	\begin{itemize}
	\item The directed edges of $E_\beta$ are derived using the structure of our instance $(\mathcal{X},\mathcal{C})$ of RXC3.
	\begin{itemize}
		\item For each $1\leq j\leq 3n$, identify the clause $c_j\in\mathcal{C}$ with the node $c_j=(bin_m(j - 1)\widebar{bin_m(j - 1)})^31^{2m}$ of the subset $\tilde{C}$ of $V_{3m}$, where in general $bin_i(j)$ is the binary representation of the natural number $j$ as a bit-string of length $i$ (so, in particular, $0\leq j< 2^i$). Henceforth, we refer to these specific nodes of $\tilde{C}$ as the set of \emph{clause nodes} $C = \{c_j: 1\leq j \leq 3n\}$.
		\item For each $1\leq i\leq 3n$, identify the element $x_i$ of $\mathcal{X}$ with the node $x_i = (bin_m(i - 1)\widebar{bin_m(i - 1)})^4$ of the subset $\tilde{X}$ of $V_{4m}$. Henceforth, we refer to these specific nodes of $\tilde{X}$ as the set of \emph{element nodes} $X = \{x_i:1\leq i\leq 3n\}$.
	\end{itemize}
	For any clause $c_j=\{x_1^j,x_2^j,x_3^j\}$ of $\mathcal{C}$, there are directed edges $(c_j^{\oplus 001}, x_1^i)$, $(c_j^{\oplus 010}, x_2^i)$ and $(c_j^{\oplus 100}, x_3^i)$ in $E_\beta$ and we refer to the nodes $c_j^{\oplus 001}$, $c_j^{\oplus 010}$ and $c_j^{\oplus 100}$ as the \emph{associate clause nodes} of clause $c_j$ or of clause node $c_j$. We define the \emph{associate element nodes} of an element or an element node analogously. This constitutes all directed edges of $E_\beta$.
	
	\item We denote by $P=\{p_i: 1\leq i\leq n\}$ a subset of $n$ nodes of $\tilde{P}$ which we refer to as the set of \emph{port nodes}. The directed edges of $E_\alpha$ are $E_\alpha^{P} \cup E_\alpha^{W}$ where:
	\begin{itemize}
		\item $E_\alpha^{P} = \{(z,\bar{z}) : z\in V_{\leq m+5}\setminus {P}\}$
		\item $E_\alpha^{W}$ is an arbitrary orientation of a specific perfect matching $M$ (coming up) on the nodes of  $W\setminus (X\cup X^{\oplus 01} \cup X^{\oplus 10})$ where $W$ is defined as  $V_{4m} \cup\bigcup_{i=1}^{3}V_{4m-i} \cup \bigcup_{i=1}^{3}V_{4m+i}$.
	\end{itemize}
	Note that $|W\setminus (X\cup X^{\oplus 01} \cup X^{\oplus 10})|$ is even and so a perfect matching exists (recall, we assumed that $n$ is even); however, as stated, we require that our perfect matching $M$ has a specific property which we describe below.
	
	\item The set of directed edges $E_1$ is defined as $\{(r, x_i): 1\leq i\leq 3n\}$; that is, as $\{r\}\times X$.
\end{itemize}

Now for our perfect matching $M$.
	
\begin{claim}\label{clm:matching}
	There is a perfect matching $M$ of $W\setminus (X\cup X^{\oplus 01} \cup X^{\oplus 10})$ so that if $(u,v)$ is in the matching then there is no static link joining $u$ and $v$.
\end{claim}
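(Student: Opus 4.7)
My plan is to exploit the fact that $Q_{8m}$ is bipartite, with bipartition given by the parity of a node's Hamming weight, so that any two nodes of the same weight-parity are automatically non-adjacent in $Q_{8m}$ (hence joined by no static link). Writing $R:=X\cup X^{\oplus 01}\cup X^{\oplus 10}$, it therefore suffices to split $W\setminus R$ into its even-weight part $W_{\mathrm{ev}}$ and odd-weight part $W_{\mathrm{od}}$, show that each has even cardinality, and then pair the vertices arbitrarily inside each part to obtain $M$.

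The first step is to verify $|R|=9n$ and to locate each piece in the correct parity class. Every $x\in X\subseteq V_{4m}$ has Hamming weight $4m$ (even), while the maps $\cdot^{\oplus 01}$ and $\cdot^{\oplus 10}$ each flip a single bit, so $X^{\oplus 01}\cup X^{\oplus 10}\subseteq V_{4m-1}\cup V_{4m+1}$ sits in the odd class. For disjointness of $X^{\oplus 01}$ and $X^{\oplus 10}$: an equation $x_1^{\oplus 01}=x_2^{\oplus 10}$ forces $x_1\oplus x_2=0^{8m-2}11$, so $x_1$ and $x_2$ would agree on the first $8m-2$ bits; but any two distinct members of $X=\{(bin_m(i-1)\widebar{bin_m(i-1)})^4\}$ differ in at least four coordinates (since the defining $m$-bit string is repeated four times), forcing $x_1=x_2$ and then equating a flip of the last bit to a flip of the second-to-last bit, which is absurd. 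Hence $|R\cap W_{\mathrm{ev}}|=3n$ and $|R\cap W_{\mathrm{od}}|=6n$, both even since $n$ is even by assumption.

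The second step is a parity count of $|W_{\mathrm{ev}}|$ and $|W_{\mathrm{od}}|$. Since $|V_k|=\binom{8m}{k}$ and $\binom{8m}{k}=\binom{8m}{8m-k}$, I obtain $|W_{\mathrm{od}}|=2\binom{8m}{4m-3}+2\binom{8m}{4m-1}$ (manifestly even) and $|W_{\mathrm{ev}}|=2\binom{8m}{4m-2}+\binom{8m}{4m}$. The latter is even because $\binom{8m}{4m}=\binom{2\cdot 4m}{4m}$ is a central binomial coefficient, and by Kummer's theorem the $2$-adic valuation of $\binom{2k}{k}$ equals the number of $1$s in the binary expansion of $k$, which is at least $1$ for every $k\geq 1$. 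Subtracting $R$ preserves both parities, so $|W_{\mathrm{ev}}\setminus R|$ and $|W_{\mathrm{od}}\setminus R|$ are each even.

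Finally, within each of the two parity classes of $W\setminus R$ every pair of vertices has even, hence positive Hamming distance at least $2$ in $Q_{8m}$, so any perfect matching on the class (which exists because the class has even size and the ``non-adjacent pairs'' graph on it is complete) consists entirely of pairs with no static link between them; the union of these two matchings is the desired $M$. The only subtle step is the parity bookkeeping, and it rests on the evenness of the central binomial coefficient $\binom{8m}{4m}$ together with the standing assumption that $n$ is even.
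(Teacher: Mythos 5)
Your proof is correct, but it follows a genuinely different route from the paper's. You exploit bipartiteness: since any two distinct nodes of $Q_{8m}$ with the same weight parity are non-adjacent, it suffices to check that each parity class of $W\setminus(X\cup X^{\oplus 01}\cup X^{\oplus 10})$ has even size and then pair arbitrarily within each class; the counting works because $\binom{8m}{4m+i}=\binom{8m}{4m-i}$, the central binomial coefficient $\binom{8m}{4m}$ is even, and the removed sets have sizes $3n$ and $6n$ with $n$ even (your disjointness check for $X^{\oplus 01}$ and $X^{\oplus 10}$, via the fact that distinct element nodes differ in many coordinates, is the one place this needs care, and you handle it). The paper instead constructs $M$ explicitly: it matches each node of $V_{4m}$ with its bitwise complement, repairs the matching around the nodes of $X$ (using that the displaced partners are pairwise far apart), and then matches $V_{4m-i}$ against $V_{4m+i}$ for $i=1,2,3$ while avoiding the associate element nodes, which requires some level-by-level cardinality bookkeeping. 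Your argument is shorter and sidesteps that bookkeeping entirely, at the cost of being non-constructive (an arbitrary pairing inside each parity class), whereas the paper's version yields a concrete matching; for the downstream use of the claim only non-adjacency of matched pairs matters, so either suffices.
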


\begin{proof}
	Define a perfect matching $M$ on $W$ as follows. First, match every node in $V_{4m}$ with the node $v$ that differs from $u$ in every bit; that is, $v=\bar{u}$.
	
	Consider some node $z\bar{z}z\bar{z}z\bar{z}z\bar{z}$ of $X$. Its matched node is $\bar{z}z\bar{z}z\bar{z}z\bar{z}z$ which is either in $X$ or outside $X$. Suppose it is the latter. Suppose also that some other node $w\bar{w}w\bar{w}w\bar{w}w\bar{w}$ is in $X$ and its matched node $\bar{w}w\bar{w}w\bar{w}w\bar{w}w$ lies outside $X$. Note that the distance between the two matched nodes $\bar{z}z\bar{z}z\bar{z}z\bar{z}z$ and $\bar{w}w\bar{w}w\bar{w}w\bar{w}w$ is at least $8$. We amend our matching $M$ by removing all matched pairs of nodes of $X$ and choose an arbitrary matching on the remaining matched nodes (note that there is an even number of such matched nodes). Any pair of nodes in $M$ is such that there is no static link joining them and every node of $V_{4m}\setminus X$ is involved in $M$.

	Extend $M$ with a matching so that every node of $V_{4m-1}\setminus (X^{\oplus 01}\cup X^{\oplus 10})$ is matched with a node in $V_{4m+1}\setminus (X^{\oplus 01}\cup X^{\oplus 10})$. This is possible as $|V_{4m-1}|=|V_{4m+1}|$ and $|V_{4m-1}\cap (X^{\oplus 01}\cup X^{\oplus 10})| = |V_{4m+1}\cap (X^{\oplus 01}\cup X^{\oplus 10})|$. Further extend $M$ with a matching so that: every node of $V_{4m-2}$ (resp. $V_{4m-3}$) is matched with a node in $V_{4m+2}$ (resp. $V_{4m+3}$). The claim follows.
\qed \end{proof}

The digraph $D^\prime$ can be visualized in Fig.~\ref{fig:Dprime2}. The rows are intended to illustrate nodes in different $V_i$'s, although the rows labelled $V_{3m-1/3m+1}$ and $V_{4m-1/4m+1}$ contain all nodes of $V_{3m-1}\cup V_{3m+1}$ and $V_{4m-1}\cup V_{4m+1}$, respectively (this is where the associate clause nodes and the associate element nodes lie, respectively). We have not shown all directed edges; just enough to give a flavour of the construction. Directed edges from $E_\beta$ are shown as solid directed edges with the directed edges corresponding to the clause $c_1=\{x_1,x_n,x_2\}$, for example, depicted (the grey nodes are the associate clause nodes in order $c_1^{\oplus 001}, c_1^{\oplus 010}, c_1^{\oplus 100},c_2^{\oplus 001},\ldots$ from left to right). Directed edges from $E_\alpha$ are shown as dotted directed edges and the grey rectangle contains all nodes from $W$ involved in directed edges of $E_\alpha$. The white nodes within this rectangle are the element nodes and the associate element nodes (that is, the nodes of $W$ that are not incident with directed edges from $E_\alpha$). Finally, the dashed directed edges depict the directed edges from $E_1$ and the port nodes and clause nodes are also shown in white.

	\begin{figure*}
		\centering
		\scalebox{1}{
			\begin{tikzpicture}
				\node[state,circle,scale=0.3,fill=black] (r) at (0,0) {};
				\node at (0,0.4) {$root = r$};
				\node[state,circle,scale=0.3,fill=black] (c11) at (-1.25,-0.5) {};
				\node[state,circle,scale=0.3,fill=black] (c12) at (1.25,-0.5) {};
				\node at (5.5,-0.5) {$V_1$};
				\node[state,circle,scale=0.3,fill=black] (c21) at (-1.75,-1) {};
				\node[state,circle,scale=0.3,fill=black] (c22) at (-0.75,-1) {};
				\node[state,circle,scale=0.3,fill=black] (c23) at (0.75,-1) {};
				\node[state,circle,scale=0.3,fill=black] (c24) at (1.75,-1) {};
				\node at (5.5,-1) {$V_2$};
				\node at (-1.75,-1.5) {$\ldots$};
				\node at (1.75,-1.5) {$\ldots$};
				\node[state,circle,scale=0.3,fill=white] (c31) at (-3.5,-2) {};
				\node at (-3.5,-1.7) {$p_1$};
				\node[state,circle,scale=0.3,fill=white] (c32) at (-2.5,-2) {};
				\node at (-2.5,-1.7) {$p_2$};
				\node at (-1.75,-2) {$\ldots$};
				\node[state,circle,scale=0.3,fill=white] (c33) at (-1.0,-2) {};
				\node at (-1.15,-1.7) {$p_n$};
				\node[state,circle,scale=0.3,fill=black] (c34) at (0,-2) {};
				\node at (0.15,-1.7) {$z$};
				\node at (0.75,-2) {$\ldots$};
				\node[state,circle,scale=0.3,fill=black] (c35) at (1.5,-2) {};
				\node at (1.5,-1.7) {$z$};
				\node[state,circle,scale=0.3,fill=black] (c36) at (2.5,-2) {};
				\node at (2.5,-1.7) {$z$};
				\node at (3.25,-2) {$\ldots$};
				\node[state,circle,scale=0.3,fill=black] (c37) at (4.00,-2) {};
				\node at (4,-1.7) {$z$};
				\node at (5.5,-2) {$V_m$};
				\node[state,circle,scale=0.3,fill=white] (c41) at (-3.5,-4) {};
				\node at (-3.7,-3.7) {$c_1$};
				\node[state,circle,scale=0.3,fill=white] (c42) at (-2.5,-4) {};
				\node at (-2.7,-3.7) {$c_2$};
				\node at (-1.75,-4) {$\ldots$};
				\node[state,circle,scale=0.3,fill=white] (c43) at (-1.0,-4) {};
				\node at (-1.05,-3.7) {$c_n$};
				\node[state,circle,scale=0.3,fill=white] (c44) at (-0.0,-4) {};
				\node at (.45,-3.7) {$c_{n+1}$};
				\node at (0.75,-4) {$\ldots$};
				\node[state,circle,scale=0.3,fill=white] (c45) at (1.5,-4) {};
				\node at (1.85,-3.7) {$c_{3n}$};
				\node[state,circle,scale=0.3,fill=black] (c46) at (2.5,-4) {};
				\node at (3.25,-4) {$\ldots$};
				\node[state,circle,scale=0.3,fill=black] (c47) at (4.00,-4) {};
				\node at (5.5,-4) {$V_{3m}$};
				\node[state,circle,scale=0.3,fill=lightgray] (c51) at (-3.85,-4.5) {};
				\node[state,circle,scale=0.3,fill=lightgray] (c52) at (-3.5,-4.5) {};
				\node[state,circle,scale=0.3,fill=lightgray] (c53) at (-3.15,-4.5) {};
				\node[state,circle,scale=0.3,fill=lightgray] (c54) at (-2.85,-4.5) {};
				\node[state,circle,scale=0.3,fill=lightgray] (c55) at (-2.5,-4.5) {};
				\node[state,circle,scale=0.3,fill=lightgray] (c56) at (-2.15,-4.5) {};
				\node[state,circle,scale=0.3,fill=lightgray] (c57) at (-1.35,-4.5) {};
				\node[state,circle,scale=0.3,fill=lightgray] (c58) at (-1.0,-4.5) {};
				\node[state,circle,scale=0.3,fill=lightgray] (c59) at (-0.65,-4.5) {};
				\node[state,circle,scale=0.3,fill=lightgray] (c5a) at (-0.35,-4.5) {};
				\node[state,circle,scale=0.3,fill=lightgray] (c5b) at (0,-4.5) {};
				\node[state,circle,scale=0.3,fill=lightgray] (c5c) at (0.35,-4.5) {};
				\node[state,circle,scale=0.3,fill=lightgray] (c5d) at (1.15,-4.5) {};
				\node[state,circle,scale=0.3,fill=lightgray] (c5e) at (1.5,-4.5) {};
				\node[state,circle,scale=0.3,fill=lightgray] (c5f) at (1.85,-4.5) {};
				\node[state,circle,scale=0.3,fill=black] (c5g) at (2.15,-4.5) {};
				\node[state,circle,scale=0.3,fill=black] (c5h) at (2.5,-4.5) {};
				\node[state,circle,scale=0.3,fill=black] (c5i) at (2.85,-4.5) {};
				\node[state,circle,scale=0.3,fill=black] (c5j) at (3.65,-4.5) {};
				\node[state,circle,scale=0.3,fill=black] (c5k) at (4.0,-4.5) {};
				\node[state,circle,scale=0.3,fill=black] (c5l) at (4.35,-4.5) {};
				\node at (5.5,-4.5) {$V_{3m-1/3m+1}$};
				\node[state,rectangle,scale=1,fill=lightgray, minimum width = 8.6cm, minimum height = 1.7cm] (rect1) at (0.2,-6.75) {};
				\node[state,circle,scale=0.3,fill=white] (c61) at (-3.75,-6.5) {};
				\node[state,circle,scale=0.3,fill=white] (c62) at (-3.25,-6.5) {};
				\node[state,circle,scale=0.3,fill=white] (c63) at (-2.75,-6.5) {};
				\node[state,circle,scale=0.3,fill=white] (c64) at (-2.25,-6.5) {};
				\node[state,circle,scale=0.3,fill=white] (c65) at (-1.25,-6.5) {};
				\node[state,circle,scale=0.3,fill=white] (c66) at (-0.75,-6.5) {};
				\node[state,circle,scale=0.3,fill=white] (c67) at (-0.25,-6.5) {};
				\node[state,circle,scale=0.3,fill=white] (c68) at (0.25,-6.5) {};
				\node[state,circle,scale=0.3,fill=white] (c69) at (1.25,-6.5) {};
				\node[state,circle,scale=0.3,fill=white] (c6a) at (1.75,-6.5) {};
				\node[state,circle,scale=0.3,fill=black] (c6b) at (2.25,-6.5) {};
				\node[state,circle,scale=0.3,fill=black] (c6c) at (2.75,-6.5) {};
				\node[state,circle,scale=0.3,fill=black] (c6d) at (3.75,-6.5) {};
				\node[state,circle,scale=0.3,fill=black] (c6e) at (4.25,-6.5) {};
				\node at (5.5,-6.5) {$V_{4m-1/4m+1}$};
				\node[state,circle,scale=0.3,fill=white] (c71) at (-3.5,-7) {};
				\node at (-3.5,-7.4) {$x_1$};
				\node[state,circle,scale=0.3,fill=white] (c72) at (-2.5,-7) {};
				\node at (-2.5,-7.4) {$x_2$};
				\node at (-1.75,-7) {$\ldots$};
				\node[state,circle,scale=0.3,fill=white] (c73) at (-1.0,-7) {};
				\node at (-1.0,-7.4) {$x_n$};
				\node[state,circle,scale=0.3,fill=white] (c74) at (0.0,-7) {};
				\node at (-0,-7.4) {$x_{n+1}$};
				\node at (0.75,-7) {$\ldots$};
				\node[state,circle,scale=0.3,fill=white] (c75) at (1.5,-7) {};
				\node at (1.5,-7.4) {$x_{3n}$};
				\node[state,circle,scale=0.3,fill=black] (c76) at (2.5,-7) {};
				\node at (3.25,-7) {$\ldots$};
				\node[state,circle,scale=0.3,fill=black] (c77) at (4.0,-7) {};
				\node at (5.5,-7) {$V_{4m}$};
				\node[state,circle,scale=0.3,fill=black] (c81) at (2.5,-9) {};
				\node at (2.5,-9.4) {$\bar{z}$};
				\node[state,circle,scale=0.3,fill=black] (c82) at (4.0,-9) {};
				\node at (4,-9.4) {$\bar{z}$};
				\node[state,circle,scale=0.3,fill=black] (c83) at (1.5,-9) {};
				\node at (1.5,-9.4) {$\bar{z}$};
				\node[state,circle,scale=0.3,fill=black] (c84) at (0,-9) {};
				\node at (0,-9.4) {$\bar{z}$};
				\node at (5.5,-9) {$V_{7m}$};
				\node[state,circle,scale=0.3,fill=black] (c91) at (-1.25,-10.5) {};
				\node at (-1.25,-10.9) {$\bar{z}$};
				\node[state,circle,scale=0.3,fill=black] (c92) at (1.25,-10.5) {};
				\node at (1.25,-10.9) {$\bar{z}$};
				\node at (5.5,-0.5) {$V_1$};
				\node[state,circle,scale=0.3,fill=black] (c8x1) at (-1.75,-10) {};
				\node at (-1.75,-10.4) {$\bar{z}$};
				\node[state,circle,scale=0.3,fill=black] (c8x2) at (-0.75,-10) {};
				\node at (-0.75,-10.4) {$\bar{z}$};
				\node[state,circle,scale=0.3,fill=black] (c8x3) at (0.75,-10) {};
				\node at (0.75,-10.4) {$\bar{z}$};
				\node[state,circle,scale=0.3,fill=black] (c8x4) at (1.75,-10) {};
				\node at (1.75,-10.4) {$\bar{z}$};
				\node[state,circle,scale=0.3,fill=black] (compr) at (-0.0,-11) {};
				\node at (0,-11.4) {$\bar{r}$};
				\path (r) edge [dashed,->] (c71);
				\path (r) edge [dashed,->] (c72);
				\path (r) edge [dashed,->] (c73);
				\path (r) edge [dashed,->] (c74);
				\path (r) edge [dashed,->] (c75);
				\draw[densely dotted, ->] (c36) to[out=290,in=70] (c81);
				\draw[densely dotted, ->] (c37) to[out=290,in=70] (c82);
				\draw[densely dotted, ->] (c35) to[out=290,in=70] (c83);
				\draw[densely dotted, ->] (c34) to[out=290,in=70] (c84);
				\path (c11) edge[densely dotted,->] (-1.25,-1.4);
				\path (c12) edge[densely dotted,->] (1.25,-1.4);
				\path (c21) edge[densely dotted,->] (-1.75,-1.4);
				\path (c22) edge[densely dotted,->] (-0.75,-1.4);
				\path (c23) edge[densely dotted,->] (0.75,-1.4);
				\path (c24) edge[densely dotted,->] (1.75,-1.4);
				\path (-1.25,-8.1) edge[densely dotted,->] (c91);
				\path (1.25,-8.1) edge[densely dotted,->] (c92);
				\path (-1.75,-8.1) edge[densely dotted,->] (c8x1);
				\path (-0.75,-8.1) edge[densely dotted,->] (c8x2);
				\path (0.75,-8.1) edge[densely dotted,->] (c8x3);
				\path (1.75,-8.1) edge[densely dotted,->] (c8x4);
				\draw[densely dotted, ->] (r) to[out=330,in=90] (0.5,-1.4);
				\draw[densely dotted, ->] (0.5,-8.1) to[out=270,in=60] (compr);
				\path (c51) edge[->] (c71);
				\path (c52) edge[->] (c73);
				\path (c53) edge[->] (c72);
			\end{tikzpicture}
		}
		\caption{The graph $D^\prime$.}
		\label{fig:Dprime2}
	\end{figure*}
	
	In order to complete our instance $(H, \mu, D,\kappa)$, we define $\mu$ to be any fixed (rational) value in the interval $(0,1)$ and $\kappa = \kappa_\beta + \kappa_\alpha + \kappa_1$ where:
	\begin{itemize}
		\item $\kappa_\beta = \frac{|E_\beta|}{3}(\mu\beta) + \frac{2|E_\beta|}{3}((\mu + 1)\beta) = 3n\mu\beta + 6n(\mu+1)\beta$
		\item $\kappa_\alpha = |E_\alpha|\mu\alpha$
		\item $\kappa_1 = |E_1|(m+1+2\mu) = 3n(m+1+2\mu)$.
	\end{itemize}
	
	Suppose that there is an exact cover $\mathcal{C}^\prime=\{c_{j_i}:1\leq i\leq n\}\subseteq \mathcal{C}$ of $\mathcal{X}$. Let the set of clause nodes $C^\prime \subseteq C$ be $\{c_{j_i}: 1\leq i\leq n\}$. We define a configuration $N$ by choosing dynamic links as follows.
	\begin{itemize}
		\item For some arbitrary bijection $f$ from $P$ to $C^\prime$, $\{(p,f(p)): p \in P\}$ is a set of $n$ dynamic links.
		\item For each clause $c_j = \{x_1^j,x_2^j,x_3^j\}\in \mathcal{C}^\prime$, there are dynamic links $(c_j^{\oplus 001}, x_1^j)$, $(c_j^{\oplus 010}, x_2^j)$ and $(c_j^{\oplus 100}, x_3^j)$; so, all nodes of $X$ are incident with a dynamic link as are all associate clause nodes of $\{c_j^{\oplus 001}, c_j^{\oplus 010},c_j^{\oplus 100} : c_j\in C^\prime\}$. Note that these dynamic links result from the directed edges of $E_\beta$ incident with the associate clause nodes of the clause nodes of $C^\prime$.
		\item All directed edges of $E_\alpha$ result in dynamic links. In particular, there is no dynamic link from a node outside $W$ to a node inside $W$ except possibly incident with the element nodes and the associate element nodes; indeed, the element nodes are already incident with such `external' dynamic links.
		\item Consider the node $x_1\in X$. The element $x_1$ appears in two clauses of $\mathcal{C}\setminus \mathcal{C}^\prime$, say $c_{i_1}$ and $c_{i_2}$. Suppose, for example, that the element $x_1$ appears as the second element of clause $c_{i_1}$ and as the first element of clause $c_{i_2}$. If so then include a dynamic link from each of $c_{i_1}^{\oplus 010}$ and $c_{i_2}^{\oplus 001}$ to $x_1^{\oplus 01}$ and $x_1^{\oplus 10}$. Alternatively, if, say, the element $x_1$ appears as the third element of clause $c_{j_1}$ and as the third element of clause $c_{j_2}$ then include a dynamic link from each of $c_{j_1}^{\oplus 100}$ and $c_{j_2}^{\oplus 100}$ to $x_1^{\oplus 01}$ and $x_1^{\oplus 10}$. Proceed similarly and analogously with all remaining element nodes of $X\setminus\{x_1\}$. On completion of this iterative process, there is a dynamic link from every associate clause node to a unique element node or associate element node; that is, these dynamic links depict a bijection from the associate clause nodes to the element nodes and the associate element nodes.
	\end{itemize}
	This constitutes the configuration $N$.
	
	Consider some $\beta$-demand of our workload. For some node $c_j\in C^\prime$, all such workloads originating at the nodes $c_j^{\oplus 001}$, $c_j^{\oplus 010}$ and $c_j^{\oplus 100}$ can be served via a flow-path consisting of a solitary dynamic link at workload cost $\mu\beta$. For some node $c_j\in C\setminus C^\prime$, suppose that there is a demand $D[c_j^{\oplus 001}, x_1^i] = \beta$. This demand exists because the element $x_i$ is the first element of clause $c_j$. As the element $x_i$ is the first element of clause $c_j$, there is a dynamic link from $c_j^{\oplus 001}$ to a neighbour of $x_i$ in $\{x_i^{\oplus 01},x_i^{\oplus 10}\}$, which w.l.o.g. we may assume to be $x_i^{\oplus 01}$; hence, the demand can be served via the flow-path $c_j^{\oplus 001}, x_i^{\oplus 01}, x_i$ at workload cost $(\mu + 1)\beta$. Defining other flow-paths analogously means that we can find flow-paths corresponding to the directed edges of $E_\beta$ the total workload cost of which is $3n\mu\beta + 6n(\mu + 1)\beta = \kappa_\beta$.
	
	All $\alpha$-demands can be served via a flow-path consisting of a solitary dynamic link at total workload cost $|E_\alpha|\mu\alpha = \kappa_\alpha$. 
	
	Consider some $1$-demand $D[r,x_i]=1$. Let the clause $c_j\in \mathcal{C}^\prime$ be such that $x_i$ is an element of $c_j$; hence, there is a dynamic link from some port node $p_{j^\prime}$ of $P$ to $c_j$ and a dynamic link from some neighbour of $c_j$ from $\{c_j^{\oplus 001}, c^{\oplus 010}, c^{\oplus 110}\}$ to $x_i$. Consequently, there is a static path from $r$ to $p_{j^\prime}$ of weight $m$, a dynamic link from $p_{j^\prime}$ to the node $c_j$, a static link from $c_j$ to the above neighbour in $\{c_j^{\oplus 001}, c^{\oplus 010}, c^{\oplus 110}\}$ and a dynamic link from this neighbour to $x_i$. Hence, the demand $D[r,x_i]=1$ can be served via a flow-path of workload cost $m+1+2\mu$ with all $1$-demands served via flow-paths at a total workload cost $3n(m+1+2\mu) = \kappa_1$. Consequently, our instance $(H, \mu,D,\kappa)$ is a yes-instance of \textsc{1-switched RRP($\sigma=3$)}.
	
	Conversely, suppose it is the case that $(H, \mu,D,\kappa)$ is a yes-instance of \linebreak\textsc{1-switched RRP($\sigma=3$)} and that $N$ is a configuration and $F$ a set of flow-paths witnessing that the total workload cost is at most $\kappa$. 
	
	\begin{claim}\label{allEalpha}
		Every directed edge $(u,v)$ of $E_\alpha$ is necessarily such that $(u,v)\in N$ and the total workload cost of flow-paths serving the $\alpha$-demands is exactly $\kappa_\alpha$.
	\end{claim}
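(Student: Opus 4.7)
The plan is to argue by contradiction, exploiting the fact that $\alpha$ is chosen to dominate both $\beta$ and the combined budget $\kappa_\beta + \kappa_1$. I suppose that either some $(u,v)\in E_\alpha$ is not a dynamic link of $N$, or some flow-path in $F$ serving an $\alpha$-demand has weight strictly greater than $\mu$. In either case I aim to show that the corresponding demand must incur cost at least $(1+\mu)\alpha$ rather than $\mu\alpha$, contributing an unaffordable excess of at least $\alpha$ to the total workload cost.

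The key technical step is a lower bound of $1+\mu$ on the weight of any path from $u$ to $v$ in $H(N)$ other than the single dynamic link $(u,v)$. Two structural features drive this. First, the endpoints of every edge in $E_\alpha$ are non-adjacent in the static hypercube: pairs in $E_\alpha^P$ are antipodal in $Q_{8m}$, while pairs in $E_\alpha^W$ were explicitly chosen non-adjacent by Claim~\ref{clm:matching}. Consequently, any purely static alternative has weight at least $2$. Second, because $\Delta_S=1$ every node has at most one dynamic neighbour, which rules out any two \emph{consecutive} dynamic links in a path. Hence any alternative flow-path uses at least one static link, and the cheapest admissible mixture of one static and one dynamic link has weight $1+\mu$, while any purely static alternative costs at least $2 > 1+\mu$.

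From this lower bound, a single deviating $\alpha$-demand forces $W_\alpha \geq \kappa_\alpha + \alpha$, so the yes-instance bound $W_\alpha + W_\beta + W_1 \leq \kappa$ combined with $W_\beta, W_1 \geq 0$ would force $\alpha \leq \kappa_\beta + \kappa_1$. I then plan to derive the contradiction by direct substitution: since $\mu<1$, we have $\kappa_\beta + \kappa_1 = (9\mu+6)n\beta + 3n(m+1+2\mu) < 15n\beta + 3n(m+1+2\mu)$, which is strictly less than the assumed bound $\alpha > 15n\beta + 9n^2 + 9n$ for $m=\lceil \log_2(3n)\rceil$ and $n$ moderately large. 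The contradiction simultaneously yields $(u,v)\in N$ for every $(u,v)\in E_\alpha$ and $W_\alpha = |E_\alpha|\mu\alpha = \kappa_\alpha$ exactly.

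The main obstacle I anticipate is rigorously establishing the $1+\mu$ lower bound under the relaxed segregation parameter $\sigma=3$: I must rule out that longer alternating patterns (such as dynamic-static-dynamic, or static-dynamic-static-dynamic) could undercut it. Since $\Delta_S=1$ forbids consecutive dynamic links, a path of $k\geq 2$ links contains at least $\lfloor k/2\rfloor$ static links, and so its weight is at least $\lceil k/2\rceil\mu + \lfloor k/2\rfloor$, which is minimised at $k=2$ yielding exactly $\mu+1$; this monotonicity observation should be stated explicitly to cover every routing option permitted by $\sigma=3$.
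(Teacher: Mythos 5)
Your proposal is correct and follows essentially the same route as the paper: assume some $\alpha$-demand costs more than $\mu\alpha$, use $\Delta_S=1$ (no two consecutive dynamic links) together with the non-adjacency of $E_\alpha$ endpoints (Claim~\ref{clm:matching} for $E_\alpha^W$, antipodality for $E_\alpha^P$) to force an excess of at least $\alpha$, and contradict $\alpha > 15n\beta + 9n^2 + 9n \geq \kappa_\beta + \kappa_1$; your explicit monotonicity check over all $\sigma=3$ patterns is a slightly more careful rendering of the paper's one-line bound. You also correctly read the claim's ``$\kappa_1$'' as a typo for $\kappa_\alpha$, which is what the paper's proof and its later use of the claim intend.
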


\begin{proof}
	Suppose that at least one of the $\alpha$-demands is served via a flow-path at a workload cost of more than $\mu\alpha$; so, it must be at a workload cost of more than $(1+\mu)\alpha$ as we cannot traverse a dynamic link followed immediately by another dynamic link (recall, $\Delta_S=1$) and by Claim~\ref{clm:matching}, if $D[u,v]=\alpha$ then there is no static link $(u,v)$. Hence, the total workload cost of flow-paths serving the $\alpha$-demands is at least $(|E_\alpha| - 1)\mu\alpha + (1+\mu)\alpha = \kappa_\alpha +\alpha$. We have that $\kappa_\beta + \kappa_1 = 3n\mu\beta + 6n(\mu + 1)\beta + 3n(m+1+2\mu) < 3n\beta + 12n\beta + 3n(3n+3) = 15n\beta + 9n^2+9n < \alpha$. Hence, the total workload cost of all flow-paths of $F$ is strictly greater than $\kappa$ which yields a contradiction and the claim follows.
\qed \end{proof}
Call the dynamic links corresponding to the directed edges of $E_\alpha$ the \emph{$\alpha$-dynamic links}.

\begin{claim}\label{allEbeta}
	Exactly $3n$ (resp. $6n$) $\beta$-demands are served by a flow-path of workload cost $\mu\beta$ (resp. $(1+\mu)\beta$) exactly; that is, exactly $3n$ (resp. $6n$) flow-paths serving the $\beta$-demands consist of a dynamic link (resp. a dynamic link and a static link). Hence, the total workload cost of the flow-paths serving $\beta$-demands is exactly $\kappa_\beta$.
\end{claim}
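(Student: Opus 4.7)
\textbf{Proof plan for Claim~\ref{allEbeta}.}

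The plan is a three-step argument that mirrors the structure of Claim~\ref{allEalpha}: establish a discrete lower bound on each $\beta$-demand's flow-path cost, cap the number that can attain the minimum using $\Delta_S=1$, and then exploit the fact that $\beta$ is chosen vastly larger than $\kappa_1$ to rule out any positive slack. For each $\beta$-demand $(u,v)=(c_j^{\oplus\ell},x_i)$, every flow-path weight in $H(N)$ is a sum of static weights $1$ and dynamic weights $\mu$, and $\Delta_S=1$ forbids consecutive dynamic links on any path. Thus, apart from $\mu\beta$ (a single dynamic link $(u,v)$) and $(1+\mu)\beta$ (a length-$2$ path with one link of each type), the only potential cost below $(1+\mu)\beta$ is $\beta$ from a single static link $(u,v)$. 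I would rule the latter out by a Hamming-distance calculation in $Q_{8m}$ using the explicit encodings: for $i\ne j$ the first $6m$ coordinates of $u$ and $v$ differ in at least $6$ bits (six times the Hamming weight of $bin_m(j-1)\oplus bin_m(i-1)$), and for $i=j$ the first $6m$ bits agree while the last $2m$ bits contribute Hamming distance $m-1$ or $m+1$ depending on the last bit of $bin_m(i-1)$. Both are at least $2$ once $m\ge 3$, which we may assume, so $u$ and $v$ are never adjacent in $Q_{8m}$ and every flow-path serving a $\beta$-demand has cost either exactly $\mu\beta$ or at least $(1+\mu)\beta$.

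Second, a cost-$\mu\beta$ flow-path is necessarily the dynamic link $(u,x_i)$ itself, incident at the element node $x_i$; by $\Delta_S=1$ each $x_i$ hosts at most one dynamic link, and exactly three $\beta$-demands terminate at each $x_i$, so at most one of them attains cost $\mu\beta$. Summing over the $3n$ element nodes bounds the number $k$ of cost-$\mu\beta$ demands by $3n$; the remaining $9n-k\ge 6n$ demands each contribute at least $(1+\mu)\beta$, so that the total $\beta$-cost is at least
\[
k\mu\beta+(9n-k)(1+\mu)\beta \;=\; 9n\mu\beta+(9n-k)\beta \;\ge\; \kappa_\beta,
\]
with equality iff $k=3n$ and every one of the remaining $6n$ demands attains cost exactly $(1+\mu)\beta$.

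Third, Claim~\ref{allEalpha} gives $\alpha$-cost $=\kappa_\alpha$, so the $\beta$-cost plus the $1$-demand cost is at most $\kappa_\beta+\kappa_1$. Any strictly positive excess of the $\beta$-cost over $\kappa_\beta$ is at least $\min(\mu,1-\mu)\beta$: the cheapest single deviation from the equality profile is either upgrading one $(1+\mu)\beta$-path to $(1+2\mu)\beta$ (excess $\mu\beta$) or replacing it by a two-static-link path of cost $2\beta$ (excess $(1-\mu)\beta$); decrementing $k$ by one costs the larger excess $\beta$. The construction's double condition $\beta>\max\{\kappa_1/\mu,\kappa_1/(1-\mu)\}$ yields $\min(\mu,1-\mu)\beta>\kappa_1$, which would force the $1$-demand cost negative---a contradiction. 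Hence the $\beta$-cost equals $\kappa_\beta$ exactly, $k=3n$, and each of the remaining $6n$ demands is served at cost exactly $(1+\mu)\beta$; since, subject to $\Delta_S=1$, the only way to write $1+\mu$ as a sum of link weights from $\{1,\mu\}$ is as $1+\mu$ itself, every such flow-path consists of precisely one static and one dynamic link. The principal obstacle is the Hamming-distance verification that no associate clause node is adjacent to any element node in $Q_{8m}$; the counting and budget arithmetic are then routine.
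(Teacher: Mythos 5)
Your proposal is correct and follows essentially the same route as the paper's proof: cap the number of cost-$\mu\beta$ flow-paths at $3n$ via $\Delta_S=1$, observe that the only other achievable costs are $(1+\mu)\beta$ and then at least $\min\{2,1+2\mu\}\beta$, and combine Claim~\ref{allEalpha} with $\beta>\max\{\kappa_1/\mu,\kappa_1/(1-\mu)\}$ to force the total to equal $\kappa_\beta$ exactly with the stated profile of $3n$ and $6n$ flow-paths. Your explicit Hamming-distance verification that no associate clause node is adjacent in $Q_{8m}$ to an element node (so a single static link can never serve a $\beta$-demand) makes explicit a fact the paper's argument uses only implicitly, which is a nice touch.
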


\begin{proof}
All $\beta$-demands are from a unique associate clause node to an element node. As $\Delta_S=1$, at most $3n$ dynamic links can be incident with an element node. So, at most $3n$ $\beta$-demands are served by a flow-path consisting of a solitary dynamic link. If a $\beta$-demand is served by a flow-path that does not consist of a solitary dynamic link then the flow-path has workload cost at least $(1+\mu)\beta$. 

Suppose there are less than $3n$ $\beta$-demands that are served by a flow-path consisting of a solitary dynamic link. So, the total workload cost of flow-paths serving $\beta$-demands is at least $(3n-1)\mu\beta + (6n+1)(1+\mu)\beta = 9n\mu\beta + 6n\beta + \beta$. By Claim~\ref{allEalpha}, $9n\mu\beta + 6n\beta + \beta \leq \kappa_\beta+\kappa_1 = 3n\mu\beta + 6n(\mu+1)\beta + 3n(m+1+2\mu)$; that is, $\beta \leq 3n(m+1+2\mu)$ which yields a contradiction. So, there are exactly $3n$ $\beta$-demands served by a flow-path consisting of a solitary dynamic link. 

Suppose that there is a $\beta$-demand served by a flow-path of workload cost neither $\mu\beta$ nor $(1+\mu)\beta$. Such a flow-path has workload cost at least $\gamma\beta$ where $\gamma = \min\{2, 1+2\mu\}$. As before, the total workload cost of flow-paths serving $\beta$-demands is at least $3n\mu\beta + (6n-1)(1+\mu)\beta + \gamma\beta = 9n\mu\beta + 6n\beta + (\gamma - 1 - \mu)\beta$. By Claim~\ref{allEalpha}, $9n\mu\beta + 6n\beta + (\gamma - 1 - \mu)\beta \leq \kappa_\beta+\kappa_1 = 3n\mu\beta + 6n(\mu+1)\beta + 3n(m+1+2\mu)$; that is, $(\gamma - 1 - \mu)\beta \leq 3n(m+1+2\mu)$. If $\gamma = 2$ then $\beta \leq \frac{3n(m+1+2\mu)}{1-\mu}$; and if $\gamma= 1+2\mu$ then $\beta \leq \frac{3n(m+1+2\mu)}{\mu}$. Whichever is the case, we obtain a contradiction. Hence, there are exactly $6n$ $\beta$-demands served by a flow-path of workload cost $(1+\mu)\beta$. Each of these flow-paths consists of a dynamic link and a static link and the claim follows.
\qed \end{proof}

By Claim~\ref{allEbeta}, each element node $x$ is incident via a dynamic link to a unique associate clause node of some clause $c$ so that the element $x\in\mathcal{X}$ is in the clause $c\in\mathcal{C}$. Consider some $\beta$-demand served by a flow-path of weight $1+\mu$; that is, a flow-path $f$ consisting of a static link and a dynamic link. By Claim~\ref{allEalpha}, every node of $W\setminus(X\cup X^{\oplus 01} \cup X^{\oplus 10})$ is incident with a dynamic link incident with some other node of $W\setminus(X\cup X^{\oplus 01} \cup X^{\oplus 10})$. So, no node of $W\setminus(X\cup X^{\oplus 01} \cup X^{\oplus 10})$ can appear on $f$. Also, no associate clause node is adjacent via a static link to any other associate clause node. Hence, $f$ must be of the form $c^\prime, x^\prime, x_i$, where $c^\prime$ is an associate clause node, of some clause $c$, that is adjacent via a dynamic link to the associate element node $x^\prime$ which is adjacent via a static link to the element node $x$ and where the element $x\in\mathcal{X}$ is in the clause $c\in\mathcal{C}$. In particular:
\begin{itemize}
	\item for every $x_i\in \mathcal{X}$, if $x_i\in c_1^i,c_2^i,c_3^i$, with $c_1^i,c_2^i,c_3^i \in \mathcal{C}$, then there are $3$ dynamic links incident with a unique node of $\{x_i,x_i^{\oplus 01},x_i^{\oplus 10}\}$ and incident with exactly one node of each of the sets of associate clause nodes of $c_1^i$, $c_2^i$ and $c_3^i$
	\item if there is a dynamic link from an element node $x$ or one of its associate element nodes to an associate clauses node of some clause $c$ then the element $x\in\mathcal{X}$ is in the clause $c\in\mathcal{C}$.
\end{itemize}
Call these dynamic links the \emph{$\beta$-dynamic links}. Also, by Claim~\ref{allEbeta}, the total workload cost of the flow-paths serving the $\beta$-demands is $\kappa_\beta$. Our aim now is to show that there are $n$ clauses with the property that every associate clause node of any of these clauses is incident with a $\beta$-dynamic link incident with an element node. Doing so would result in our instance of RXC3 being a yes-instance.

Consider some flow-path $f$ that services some $1$-demand. Our preliminary aim is to show that every such flow-path $f$ has weight at least $m+1+2\mu$.

Suppose that $f$ involves $2$ dynamic links. Hence, the structure of $f$ is $s^\ast d s^+ d$ or $d s^+ d s^\ast$, where $s$ (resp. $d$) denotes a static (resp. dynamic) link and $\ast$ (resp. $+$) denotes at least $0$ (resp. at least $1$) occurrence (in a regular-language style); recall that $\sigma=3$.
\begin{itemize}
	\item If $f$ has structure $s^\ast d s^+ d$ then the second dynamic link must be of the form $(c^\prime,x)$, where $c^\prime$ is an associate clause node and $x$ is an element node. If the first dynamic link is not an $\alpha$- or $\beta$-dynamic link then the initial prefix of static links has weight at least $m$. Hence, the total weight of $f$ is at least $m+1+2\mu$. Alternatively, suppose that the first dynamic link is an $\alpha$- or $\beta$-dynamic link.
	\begin{itemize}
		\item If it is an $\alpha$-dynamic link and is of the form $(z,\bar{z})$, for some $z \in V_{\leq m}\setminus P$, then the total weight of $f$ is at least $i + \mu + 5m-i-1 +\mu$, for some $0\leq i \leq m$; that is, at least $5m-1+2\mu > m+1+2\mu$. 
		\item If it is an $\alpha$-dynamic link and is of the form $(u,v)$, for some $u,w\in W$, then the total weight of $f$ is at least $4m-1+\mu +1+\mu= 4m+2\mu > m+1+2\mu$.
		\item If it is a $\beta$-dynamic link then the total weight of $f$ is at least $3m-1+\mu + 1 + \mu =3m+2\mu> m+1+2\mu$.
	\end{itemize}
	\item If $f$ has structure $d s^+ d s^\ast$ then the first dynamic link must be $(r,\bar{r})$. 
	\begin{itemize}
		\item Suppose that the first sub-path of static links has weight less than or equal to $m$ with the second dynamic link being an $\alpha$-dynamic link of the form $(\bar{z},z)$. Then the total weight of $f$ is at least $\mu + i + \mu + 4m-i$, for some $0\leq i< m$; that is, $4m+2\mu > m+1+2\mu$.
		\item Suppose that the first sub-path of static links has weight exactly $m$ with the second dynamic link not an $\alpha$- or $\beta$-dynamic link. As all nodes of $W$ are already incident with a dynamic link, we must have that the weight of $f$ is at least $\mu+m+\mu+4 = m+4+2\mu > m+1+2\mu$.
		\item Suppose that the first sub-path of static links has weight greater than $m$. The weight of $f$ must be at least $\mu + m+1 +\mu +1= m+2+2\mu > m+1+2\mu$.
	\end{itemize}
\end{itemize}

Suppose that $f$ involves $1$ dynamic link; so, the structure of $f$ is $s^\ast d s^\ast$. If the dynamic link is an $\alpha$- or $\beta$-dynamic link then no matter which dynamic link this is, the weight of $f$ is greater than $3m-1+\mu > m+1+2\mu$. If the dynamic link is not an $\alpha$- or $\beta$-dynamic link then the weight of $f$ is at least $m+\mu+ 4 > m+1+2\mu$, as this dynamic link cannot be incident with any node of $W$. Finally, if $f$ does not involve a dynamic link then the weight of $f$ is at least $4m>m+1+2\mu$. Hence, every flow-path serving some $1$-demand has weight at least $m+1+2\mu$ and when it has weight exactly $m+1+2\mu$, it takes the form of a static path from $r$ to some port node of $P$, augmented with a dynamic link to some neighbour of an associate clause node, augmented with a static link to the associate clause node, and augmented with a dynamic link from the associate clause node to some element node. As the total workload cost of all flow-paths serving $1$-demands is at most $\kappa_1 = 3n(m+1+2\mu)$, every such flow-path has weight exactly $m+1+2\mu$ and is therefore of the form just described. 

There are $3n$ $1$-demands yet only $n$ dynamic links from a port node to some neighbour of an associate clause node. Hence, these $n$ neighbours of associate clause nodes must lie on $3n$ flow-paths. Consequently, these neighbours of associate clause nodes must actually be clause nodes. Let these clause nodes be $C^\prime = \{c_{j_1}, c_{j_2}, \ldots, c_{j_n}\}$. Thus, we have that there is a dynamic link from every associate clause node of each $c_{j_i}$ to some element node and we have a subset of clauses $\mathcal{C}^\prime = \{c_{j_1}, c_{j_2}, \ldots, c_{j_n}\}$ so that every element of $\mathcal{X}$ lies in exactly one clause of $\mathcal{C}^\prime$; that is, $(\mathcal{X},\mathcal{C})$ is a yes-instance of RXC3. The result follows as our instance $(H,\mu,D,\kappa)$ can clearly be constructed from $(\mathcal{X},\mathcal{C})$ is time polynomial in $n$.\qed \end{proof}

We emphasize the relevance of the hypercube as a prototypical model of interconnection networks (see, e.g., \cite{GLL09}) and the fact that we obtain hardness here for any choice of fixed dynamic link weight $\mu$ between $0$ and $1$. 

Taken together our results comprehensively establish the computational hardness of RRP in practically relevant settings. In particular, we establish that the problem remains intractable in several cases where the demand matrix is sparse, the hybrid network is highly structured (in fact node-symmetric) and the weights of links depend only on their medium. 

\subsection{The case of $\delta=1$}\label{sec:lunar}

This section is devoted to the restriction of RRP where the dynamic link limit $\delta$ is set to 1 - where any flow-path must use no more than a single dynamic link. We shall require substantially different techniques from those we have used up until now so to prove our main result in the remainder of the paper, which entails that the problem \textsc{1-switched RRP}($\sigma=1$) is NP-complete for various graph classes including hypercubes, grids and toroidal grids. 
The intention is that this section provides a convenient template for proving hardness of the problem for other practically interesting classes beyond the ones explicitly considered here.

\subsubsection{Additional definitions}

We begin with some basic definitions from graph theory. Let $G=(V,E)$ be a simple undirected graph. 
We define the \emph{open neighborhood} of a vertex $v$ to be $N(v):=\{u : (u,v) \in E\}$, and its closed neighborhood $N[v]:=N(v) \cup \{v\}$. Likewise for any set of vertices $S$, we define $N[S]:=\cup_{v \in S} N[v]$ and $N(S):=N[V] \setminus S$. 
A \emph{dominating set} in $G$ is a set of vertices $s$ such that each vertex in $G$ is either in $S$ or adjacent to a vertex in $S$. The \emph{domination number} of $G$, denoted $\gamma(G)$, is the least number of vertices in any dominating set of $G$. \textsc{Dominating Set} is the decision problem asking, for input $G$ and $k$, whether $\gamma(G) \le k$.

Where $S \subseteq V$ is a set of vertices in the graph, we denote $G[S]$ the subgraph of $G$ induced by $S$. That is, $G[S]$ has $S$ as its set of vertices and as edges exactly those edges of $G$ with both endpoints incident to a vertex in $S$. Also, $\mathrm{dist}_G(u,v)$ is the number of edges on the shortest path between $u$ and $v$ in $G$. 

\begin{definition}[Ball, Sphere]
    Where $G$ is a graph, we denote $B_G(v,r)$ (resp. $S_G(v,r)$) the \emph{ball} (resp. \emph{sphere}) with center $v$ and radius $r$ in $G$. Formally these are sets of vertices defined as:
    \begin{align*}
        B_G(v,r) &:= \{u : \mathrm{dist}_G(u,v) \le r\} \\
        S_G(v,r) &:= \{u : \mathrm{dist}_G(u,v) = r\} 
    \end{align*}
\end{definition}

\noindent We shall also make use of the following (likely non-standard) definitions.

\begin{definition}[Restriction of decision problems to a graph class]
    Let $\Pi$ be a decision problem which takes one or many inputs, exactly one of which is a simple undirected graph. We denote by $\Pi(\mathcal{G})$ the restriction of $\Pi$ to the graph class $\mathcal{G}$. That is, $\Pi(\mathcal{G})$ has as yes-instances (resp. no-instances) exactly those yes-instances (resp. no-instances) of $\Pi$ where the graph portion of the input belongs to $\mathcal{G}$. 
\end{definition}

We introduce a new graph problem, which is subtly different from the classic \textsc{Dominating Set}, and which has the property that it may remain hard even when the graph portion of the input is highly structured. This subtlety will become important presently - note that the property is precisely that which we wish to study for \textsc{1-Switched RRP}($\delta=1$).

\begin{framed}
    \noindent
    \textbf{\textsc{Partial Domination}}\\
    \emph{Input:} Simple undirected graph $G=(V,E)$; set $T \subseteq V$; integer $k$.\\
    \emph{Question:} Is $(G[T],k)$ a yes-instance of \textsc{Dominating Set}? Equivalently, is there some set $X\subseteq T$ with $|X| \le k$ such that $T \subseteq N_G[X]$?
\end{framed}


\subsubsection{Hardness of \textsc{Partial Domination} on (toroidal) grids and hypercubes}\label{subsubsec:pardom}

We now show that \textsc{Partial Domination} is NP-complete for several graph classes of interest to us, namely grids and hypercubes. This computational hardness (together with another graph class property we shall come to later) provides the foundation for our proof of Theorem \ref{thm:lunarRRP}.
We note that \textsc{Dominating Set} is trivially tractable for grids, though the same cannot be said of hypercubes; even $\gamma(Q_{10})$ is unknown \cite{oeis}. 
This subsection leverages several results from the literature, which more or less straightforwardly yield the desired results. 

\begin{theorem}[\cite{CCJ90} Theorem 5.1]
    \textsc{Dominating Set}(Induced subgraphs of grids) is NP-complete.
\end{theorem}

Note that the corollary below relies on some subtle properties of the proof applied in \cite{CCJ90} (namely, that the construction provided explicitly describes an embedding into some grid). 
It is in general NP-hard, given some graph, to determine whether it is an induced subgraph of a grid (and also to produce its vertices' coordinates in a grid - see \cite{GSZ23} and references therein).

\begin{corollary}
    \textsc{Partial Domination}(Grids) is NP-complete.
\end{corollary}

We now turn to the other graph class of interest to us - the hypercubes $\mathcal{Q}$. Although it is straightforward to show that every grid is the induced subgraph of some hypercube, we require for technical reasons that, more strongly, the hypercube in question is at most polynomially larger than the contained grid. Fortunately, we are able to rely here on the extensive literature surrounding the so-called \emph{snake-in-the-box} problem, which consists in finding large induced cycles in hypercubes. The following result belongs to that body of work.

\begin{theorem}[Abbott and Katchalski \cite{AK91,AK88}]\label{thm:ak-snake}
    For any $d\ge 2$, the hypercube $Q_d$ contains an induced path on $\frac{77}{256}2^d$ vertices. Given $d$, the coordinates of such a path may be produced in time polynomial in $2^d$.
\end{theorem}

In their work, Abbott and Katchalski state that there is an induced cycle on \textbf{strictly more than} $\frac{77}{256}2^d$ vertices, which entails an induced path on exactly $\frac{77}{256}2^d$ vertices.
The authors do not discuss the runtime of their construction in their work. However, it is clear from their proof that their description of the induced cycle can be realized as a recursive algorithm with a runtime as stated in the theorem. Their result has the following consequence, which shall be useful to us towards proving the computational hardness of \textsc{Partial Domination}($\mathcal{Q}$).

\begin{corollary}\label{cor:grid-in-hypercube}
    For any $d \ge 2$, the hypercube $Q_{2d}$ contains an induced grid on $(\frac{77}{256}2^d)^2$ vertices. Given $d$, the coordinates of such a grid may be produced in time polynomial in $2^d$.
\end{corollary}

\begin{proof}
    Given $d$, produce coordinates of a path $P=\{p_1, p_2, \ldots, p_\ell\}$ of length $\ell=\frac{77}{256}2^d$ in $Q_d$ by applying Theorem \ref{thm:ak-snake}. Note that each (node) $p_i$ is a bitstring of length $d$ exactly. Then the set $\{p_ip_j : i,j \in [\ell]\}$ is an induced grid of size $(\frac{77}{256}2^d)^2$ in $Q_{2d}$, and the result follows.
\end{proof}

It remains to combine the above results.

\begin{theorem}\label{thm:hypercube-partial-domination-np-c}
    \textsc{Partial Domination}($\mathcal{Q}$) is NP-complete.
\end{theorem}

\begin{proof}
    Let $(G,T,k)$ be an instance of \textsc{Partial Domination}(Grids). We assume without loss of generality that $G$ is an $n \times n$ grid (if necessary, by extending $G$ in some dimension and retaining the same set $T$).
    We shall denote each vertex in $G$ by $v_{i,j}$ with $i,j \in [n]$. 
    Let $d=\lceil \log_2(\frac{256n}{77})\rceil$. Applying Corollary \ref{cor:grid-in-hypercube}, we may efficiently produce a mapping from $V(G)$ (vertices of the $n \times n$ grid) to $V(Q_{2d})$. Denote this mapping $f$.
    Let $G'=Q_{2d}$, and $T'=\{f(t) : t \in T\}$. 
    Then $G'[T']=G[T]$ (so $\gamma(G'[T'])=\gamma(G[T])$ also) and $(G',T',k)$ is an instance of \textsc{Partial Domination}($\mathcal{Q}$). The construction of $G'$ and $T'$ is feasible in polynomial time, and the result follows. \qed
\end{proof}

A reader interested in showing that some other graph class $\mathcal{G}$ (e.g. subcubic graphs) is hard for \textsc{1-Switched RRP} may substitute the above for a proof that \textsc{Partial Domination}($\mathcal G$) is NP-complete. As briefly alluded to earlier, this is one of two properties we shall require of a graph class in the proof of Theorem \ref{thm:lunarRRP}; we turn to the second property presently. 

\subsubsection{Lunar graph classes}\label{subsubsec:lunar}

We have chosen to adopt a celestial metaphor to aid intuition, since in defining the class we make use of spheres, balls, large distances, and vast differences in size. The reader may find Figs. \ref{fig:lunar-grids} and \ref{fig:lunar-hypercube-weights} helpful in illustrating the definition. 

\begin{definition}[Moon, Planet, Sun]
    We say a graph $G$ is the \emph{moon} in some graph $H$ if:
    \begin{itemize}
        \item There is some set $M\subset V(H)$ with $H[M]=G$. We also call $M$ the \emph{moon} in $H$.
        \item There is some $o \in V(H)$ and integer $r$ such that the set $\surface:=S(o,r)$ is a sphere of cardinality at least $|M|$ and $\planet:=B(o,r)$ is the ball with the same center and radius. We call $\planet$ the \emph{planet} and $\surface$ the \emph{planet surface} (or, briefly, \emph{surface}) in $H$.
        \item $N[M] \cap N[\planet]=\emptyset$ (the planet and the moon are far apart).
        \item $|\surface| \ge |M|$ (the surface is bigger than the moon).
        \item There is some set $S \subset V(H)$ (the \emph{sun} in $H$) such that $|S| \ge |N[\planet]| + |N[M]|$ (the sun is bigger than the moon and planet together) and $N[S]$ is disjoint from $N[\planet]$ (the sun and planet are far apart). Moreover any node in $S$ is at distance at least $2r$ from any node in $M$ (the sun and moon are very far apart).
    \end{itemize}
\end{definition}

We continue with an astronomical theme for the naming of the graph class of interest itself:

\begin{definition}[lunar graph class]
    We say a graph class $\mathcal{L}$ is \emph{lunar} in a graph class $\mathcal{H}$ if, for each $G \in \mathcal{L}$, there exists some $H \in \mathcal{H}$ such that $G$ is the moon in $H$. 
    We further require that such a graph $H$ (and vertex sets $M,\surface,S$ within it) can be constructed in polynomial time from $G$ (which entails that $H$ is at most polynomially larger than $G$). We say a graph class $\mathcal{L}$ is \emph{lunar in itself} (or simply \emph{lunar}) if the above holds for $\mathcal{H}=\mathcal{L}$.
\end{definition}

For this definition to be useful, we still need to show that it holds for those graph classes which we are interested. The previously hypothesized reader interested in proving NP-completeness of \textsc{1-Switched RRP}($\delta=1$) restricted to, e.g., subcubic graphs, may find the following a useful blueprint to prove that subcubic graphs are lunar.

\begin{figure}[!ht]
    \centering
    \includegraphics[width=1\linewidth,page=10]{RRP_algowin_figs.pdf}
    \caption{Illustration of our construction where $n=5, m=3, r=4$. The $28 \times 11$ grid contains the $5 \times 3$ grid as a moon. 
    Marked are: the moon $M$ (square vertices); $o=(n+r,m+r)=(7,9)$, together with the planet $\planet=B(o,r)$ ($o$ and disk and circle vertices) and its surface $\surface=S(o,r)$ (circle vertices); the sun $S=\{(i,j) : i \le 8\}$ (cross vertices).}
    \label{fig:lunar-grids}
\end{figure}

\begin{lemma}\label{lem:grids-are-lunar}
    The class of grid graphs is lunar in itself.
\end{lemma}
\begin{proof}
    The reader may find the illustrative example in Fig. \ref{fig:lunar-grids} helpful. Let $G$ be some $n \times m$ grid (w.l.o.g. $n \ge m$, so $|G|$ is polynomial in $n$). 

    We shall use the fact that a sphere of radius $r \ge 1$ in a grid (which does not spill over the grid's boundary) has size $4r$. 
    
    Let $r=\lceil \frac{nm}{4} \rceil$. 
    Let $x=2(n + 2r + 1)$ and let $y=2(m + 2r)$. We choose $H$ to be the $x \times y$ grid.
    Then we identify:
    \begin{itemize}
        \item The moon: $M=V(G)$ (vertices of the $n\times m$ grid are also vertices of $x \times y$ grid).
        \item The planet: $o=(n+r, m+r)$, $\surface=S(o,r)$, $\planet=B(o,r)$. 
        \item The sun: $S=\{(i,j) : i \ge n + 2r + 3\}$.
    \end{itemize}
    It is easy to verify that the neighborhoods of the moon, planet and sun are disjoint; that the sun and moon are at distance at least $2r$; that the size of the planet exceeds that of the moon; and that the size of the sun exceeds that of the moon and planet's neighborhoods combined. Since our construction may be carried out in polynomial time, the result follows.
\qed \end{proof}

The $x \times y$ grid is an induced subgraph of the $2x \times 2y$ toroidal grid, with the notable property that every shortest path in the former remains a shortest path in the latter. Applying this fact, the proof above can straightforwardly be adapted to obtain the following corollary:

\begin{corollary}\label{cor:torus-grids-are-lunar}
    The class of grid graphs is lunar in the class of toroidal grid graphs.
\end{corollary}

Note that the class of toroidal grid graphs is \emph{not} lunar in itself. We now turn to the protagonist of this paper, the class of hypercubes. 

\begin{lemma}\label{lem:hypercubes-are-lunar}
    The class of hypercubes $\mathcal{Q}$ is lunar in itself.
\end{lemma}

\begin{proof}    
    Note that we reuse some of the notation from our proof of Theorem \ref{thm:hypercubeRRPsigma3}.
    Let $G\in \mathcal{Q}$ be some hypercube of dimension $d$, i.e. $G=Q_d$ for some $d$. We define:
    \begin{itemize}
        \item Let $\ell=6d+9$ and $H=Q_{\ell}$.
        \item Let $M:=\{0^{5d+9}x : x \in Q_d\}$. 
        \item Let $o:=0^{4d+6}1^{2d+3}$ and $r=d$. Recall we denote the planet $\planet=B(o,r)$ and its surface  $\surface=S(o,r)$. 
        \item Let $S:=\{\overline{x} : x \in N[\planet] \cup N[M]\}$. Intuitively, the sun nodes are the reflection of the neighborhoods of planet nodes and moon nodes.
        
    \end{itemize}

    The illustration in Fig. \ref{fig:lunar-hypercube-weights} will be helpful in verifying the following:
    \begin{itemize}
        \item $H[M]=G$.
        \item The sphere $\surface=S(o,r)$ has cardinality at least $|Q_d|$ (easy to see by considering $\{o \oplus 0^{4d+9}\overline{x}{x} : x \in Q_d\}$, which clearly contains only vertices at distance exactly $d$ from $o$ and so is a subset of $\surface$).
        \item $N[M]$ and $N[\planet]$ are disjoint. 
        \item $N[M]$ and $N[\planet]$ contain only vertices with at most $3d+4$ ones. 
        \item $|S| \ge |N[\planet]| + |N[M]|$ (by applying both two points above).
        \item $N[S]$ is disjoint from $N[M]$ and $N[\planet]$.
        \item Moreover any node in $S$ is at distance at least $2r$ from any node in $M$. 
    \end{itemize}

    Note that $Q_\ell$ has size $2^{6d+9}=2^9\cdot (2^d)^6$ which is polynomial in the size of $Q_d$. The result follows.
    \qed
\end{proof}

\begin{figure}
    \centering
    \includegraphics[width=0.8\linewidth, page=9]{RRP_algowin_figs.pdf}
    \caption{Illustration of our proof hypercubes are lunar, showing the weights (number of ones in any vertex label) for different sets of vertices. The midsection of the hypercube (separating majority-1 vertices from majority-0 vertices) is shown as a dashed line.}
    \label{fig:lunar-hypercube-weights}
\end{figure}

\subsubsection{The main result}\label{subsubsec:lunar-rrp}

We are now able to state and prove the main result of this subsection.

\begin{theorem}\label{thm:lunarRRP}
    For any fixed $\mu \in (0,1)$, and any graph classes $\mathcal{L}$ and $\mathcal{G}$ with $\mathcal{L}$ lunar in $\mathcal{G}$, the problem \textsc{Partial Domination}($\mathcal{L}$) is polynomially reducible to the problem \textsc{1-switched RRP($\delta = 1$)} restricted to instances $(H, \mu, D, \kappa)$ satisfying:
	\begin{itemize}
		\item $H\in \mathcal{G}$, and 
		\item the workload matrix $D$ is sparse and all values in it are polynomial in $|H|$.
	\end{itemize}
\end{theorem}

\begin{proof}
    We are given an instance $(G, T, k)$ of \textsc{Partial Domination}($\mathcal{L}$), with $G\in \mathcal{L}$, a set of vertices $T \subseteq V(G)$, and integer $k$. We shall produce an instance $(H, \mu, D, \kappa)$ of \textsc{1-Switched RRP}($\delta=1$).
    $\mu$ is some prescribed value between $0$ and $1$, as in the theorem statement.
    
    The graph $H$, along with vertex sets $M,\surface,\planet,S$, the vertex $o$, and the integer $r$, are obtained by applying the definition of a lunar class. We denote by $T'$ the set of vertices in $H$ to which $T$ is mapped, so that $H[M][T]=H[T']=G[T]$.

    In order to describe our demands $D$, we identify some useful sets of vertices in $H$:
    \begin{itemize}
        \item \emph{Target vertices} are exactly the set $T'=\{t_1',t_2',\ldots, t_{|T|}'\}$. 
        \item The set of moonlit nodes $\moonlit=\{\breve{m}_1,\breve{m}_2,\ldots,\breve{m}_k\}$ is some arbitrary subset of $\surface$ with cardinality $k$ exactly.
        \item The set of sunlit nodes $\sunlit:= N[M] \cup N[\planet] \setminus (T \cup \moonlit)$. 
    \end{itemize}

    The intention is that our construction will ensure that, in any configuration $N$ of interest to us, the moonlit nodes $\moonlit$ (resp. sunlit nodes $\sunlit$) will be connected by a dynamic link to moon nodes $M$ (resp. sun nodes $S$). 
    
    Consider a new graph: the complete bipartite graph with $\sunlit \cup S$ as vertices and $\sunlit \times S$ as edges. Let $E_S$ be an arbitrary maximum matching in this bipartite graph. Note that $|S| \ge |\sunlit|$ by construction, and so $|E_S|=|\sunlit|$ exactly and each sunlit vertex is incident to exactly one edge in $E_S$. Note that by construction $E_S \cap E(H) = \emptyset$, because none of the sunlit nodes are adjacent to any sun node (by applying a property of lunar graphs). That is, if $(u,v)$ is an edge in $E_S$ then $u$ and $v$ are not adjacent in $H$. 
    
    Let $\alpha=(r+\mu+1)|T|$. Our demand matrix $D$ is fully described by the following: 
    \begin{itemize}
        \item $D[u,v]=\alpha$ for each $(u,v)\in E_S$ (sunlight demands), 
        \item $D[o,t]=1$ for each $t\in T$ (moonlight demands), and
        \item all other entries of $D$ are zero.
    \end{itemize}
    
    \begin{figure}
        \centering
        \includegraphics[width=0.8\linewidth, page=8]{RRP_algowin_figs.pdf}
        \caption{Illustration of the demands described in our reduction, for the instance $(G,T,4)$ of \textsc{Partial Domination}(Grid graphs). Nodes of $T$ (and $T'$ in $H$) are shown as black squares; other nodes of $G$ (and $M$ in $H$) are shown as boxes. Other nodes are shown as in Fig. \ref{fig:lunar-grids} earlier.
        Moonlight demands are shown as solid blue curves. The sunlit nodes $\sunlit$ are those in orange shaded regions (note exactly $4$ nodes of $\surface$ are not sunlit - these are the moonlit nodes $\moonlit$). Sunlit demands are drawn as orange arcs (for clarity, only a few are shown).}
        \label{fig:lunar-construction-demands}
    \end{figure}

    It remains for us to define $\kappa$. As before, we define this as $\kappa_\alpha + \kappa_1$, with $\kappa_\alpha = \mu \alpha |E_S|$ and $\kappa_1 = |T|(r + \mu) + |T| - k$. We note that $\alpha > \kappa_1$.
    This completes our construction of the instance $(H, \mu, D, \kappa)$.

    \begin{claim}
        If $(G,T,k)$ is a yes-instance of \textsc{Partial Domination} then $(H, \mu, D, \kappa)$ is a yes-instance of \textsc{1-Switched RRP}($\delta=1$).
    \end{claim}
    \begin{proof}
        An illustration of an optimal configuration is shown in Fig. \ref{fig:optimal-lunar-configuration}.
        Let $X$ be a dominating set of $G[T]$ of cardinality $k$. Denote also by $X'=\{x_1',x_2', \ldots, x_k'\}$ the corresponding set of vertices in $H$. 
        
        Let $N := E_S \cup \{(x_i', \breve{m}_i) : 1 \le i \le k\}$. Clearly, under $N$ each sunlight demand is served at cost exactly $\mu \alpha$ and so all sunlight demands cumulatively are served at cost $|E_S|\mu\alpha=\kappa_\alpha$. 

        Further, the $k$ demands from each node $x_i'$ to $o$ are each served at cost exactly $\mu+r$ (by the path $x_i'\dynamic \breve{m}_i \rightsquigarrow_{r} o$), and the moonlit demand for each of the $|T|-k$ other nodes $t\in T$ is served at cost exactly $1 + \mu + r$ (by the path $t \rightsquigarrow_1 x_i' \dynamic \breve{m}_i \rightsquigarrow_{r} o$). The claim follows.\qed
    \end{proof}

    \begin{figure}[!ht]
        \centering
        \includegraphics[width=0.6\linewidth, page=11]{RRP_algowin_figs.pdf}
        \caption{Detail of an optimal configuration $N$for the instance shown in \ref{fig:lunar-construction-demands}. Only moonlight demands (solid blue arcs) and dynamic links serving them (dashed red lines) are shown. Note that all moonlit demands are served at cost exactly $\kappa_1$.}
        \label{fig:optimal-lunar-configuration}
    \end{figure}

    Conversely, suppose that $(H, \mu,D,\kappa)$ is a yes-instance of \textsc{1-switched RRP\linebreak ($\delta=1$)} and that $N$ is a configuration and $F$ a set of flow-paths witnessing that the total workload cost is at most $\kappa$. 
	
	\begin{claim}\label{clm:all-sunlit}
        Each sunlit node has a line of sight (optic link) to a sun node.	
        Formally, every edge $(u,v)$ of $E_S$ is necessarily such that $(u,v)\in N$ and the total workload cost of flow-paths serving the $\alpha$-demands is exactly $\kappa_\alpha$.        
	\end{claim}

    \begin{proof}
    	Suppose that at least one of the $\alpha$-demands is served via a flow-path at a workload cost of more than $\mu\alpha$; so, it must be at a workload cost of more than $(1+\mu)\alpha$ as we cannot traverse a dynamic link followed immediately by another dynamic link (recall, $\Delta_S=1$) and by our construction, if $D[u,v]=\alpha$ then there is no static link $(u,v)$. Hence, the total workload cost of flow-paths serving the $\alpha$-demands is at least $(|E_\alpha| - 1)\mu\alpha + (1+\mu)\alpha = \kappa_\alpha +\alpha$. We have that $\kappa_1 < (r+\mu+1)|T| = \alpha$. Hence, the total workload cost of all flow-paths of $F$ is strictly greater than $\kappa$ which yields a contradiction and the claim follows.
    \qed \end{proof}

    \begin{claim}\label{clm:moonlit-demands}
        Each moonlight demand is served at cost at least $\mu + r$. Furthermore, at most $k$ moonlight demands are served at cost $\mu + r$ exactly. 
    \end{claim}
    \begin{proof}
        Consider some moonlight demand; necessarily it has form $D[t,o]=1$ for some $t\in T$.
        
        We first show each moonlight demand is served at cost at least $\mu+r$. Suppose for contradiction that $D[t,o]$ is served at cost strictly less than $\mu + r$. Since $t$ and $o$ are at distance at least $r+3$, the flow-path from $t$ to $o$ must make use of some dynamic link $u \dynamic v$ at cost $\mu$. The static portion of the path therefore must have cost at most $r-1$, which entails that $v \in B(o,r-1) \subsetneq \sunlit$ (that is, $v$ is a sunlit node). Applying Claim \ref{clm:all-sunlit} $v$ is connected by dynamic link to some sun node, so $u \in S$ necessarily. Then the path from $t$ to $u$ has length at least $2r$, yielding the desired contradiction. 

        Now observe that $D[t,o]$ can be served at cost exactly $\mu + r$ if and only if the dynamic link $t \dynamic \breve{m}$ exists, for some $\breve{m} \in \moonlit$. Since $|\moonlit|=k$ exactly by construction, we obtain that this may be the case for at most $k$ nodes in $T$. The claim follows. 
        \qed
    \end{proof}

    Since $\kappa_1 = (r+\mu)(k) + (r + \mu + 1)(|T|-k)$, we immediately obtain that \emph{exactly} $k$ moonlight demands are served at cost $r+\mu$ and all remaining moonlight demands are served at cost $r+\mu+1$ exactly.
        
    \begin{claim}\label{clm:all-moonlit}
        The set $X=\{ u: (u,v)\in N $ and $v\in \breve{M}$ is a moonlit node$\}$ is a dominating set of $H[M]$.
    \end{claim}
    \begin{proof}
        Suppose for contradiction that there is some vertex $t\in T'$ which is neither in $X$ nor adjacent to any vertex in $X$. We show that the cost of serving the demand $D[y,o]=1$ is strictly greater than $r + \mu + 2$. Denoting arbitrary nodes $m'\in N[M],\planetv' \in N[\planet], \surfacev \in \surface$, this flow is routed either:
        \begin{itemize}
            \item Via static links only, along a path of length at least $r+3>r+\mu+1$:
            $y\rightsquigarrow_{\ge1} m' \rightsquigarrow_{\ge1} \planetv' \rightsquigarrow_1 \surfacev \rightsquigarrow_r o$, or
            \item Via static links and one dynamic link $u\dynamic v$ with:
            \begin{itemize}
                \item $v$ in $\surface$, at cost at least $r+\mu+2>r+\mu+1$: 
                $y \rightsquigarrow_{\ge 2} u \dynamic v \rightsquigarrow_r o$
                \item $u$ and $v$ both outside $N[\planet]$, at cost at least $r+\mu+2>r+\mu+1$: 
                $y \rightsquigarrow_{\ge 0} u \dynamic v \rightsquigarrow_{\ge 1} \planetv' \rightsquigarrow_1 \surfacev \rightsquigarrow_r o$
                \item either $u$ or $v$ in $N[\planet] \setminus \surface$ (and the other in $S$), at cost at least $2r + \mu$ (recall all vertices in $S$ are distance at least $2r$ from any vertex in $M$).
            \end{itemize}
        \end{itemize}
        This contradicts our earlier claim (that all moonlight demands are served at cost at most $r+\mu+1$) and the result follows.\qed
    \end{proof}
    We note that the construction described takes polynomial time, and the main result follows.
\qed    
\end{proof}

Applying our earlier results on lunar graph classes (Lemmas \ref{lem:hypercubes-are-lunar} and \ref{lem:grids-are-lunar}, Corollary \ref{cor:torus-grids-are-lunar}) together with Theorem \ref{thm:hypercube-partial-domination-np-c} we obtain that RRP remains hard even when the number of dynamic links admitted on any path is limited to $1$.

\begin{corollary}
    \textsc{1-switched RRP($\delta = 1$)($\mathcal{G}$)} is NP-complete if $\mathcal{G}$ is: the class of hypercubes;  the class of grid graphs; or the class of toroidal grid graphs.
\end{corollary}

We emphasize again the value of hypercubes as a prototypical model of interconnection networks, and additionally note that grids and toroidal grids exhibit additional properties which may have been expected to yield a tractable setting, such as planarity and bounded degree. 
The restriction to $\delta=1$ in our setting also implies a restriction to $\sigma=2$ (as 3 alternations along a path would entail a minimum 2 dynamic links along the same) but the converse does not hold; we expect that the case of \textsc{1-Switched RRP}($\sigma=2, \delta=2$) can be shown to be intractable through similar proof techniques, but leave this for future work.

\section{Discussion and Future Work}

Taken together, our results comprehensively establish the computational hardness of RRP in practically relevant settings. We establish that the problem remains intractable in several cases where the demand matrix is sparse, the hybrid network is highly structured (in fact node-symmetric) and the weights of links depend only on their medium. Furthermore, in all of our hardness results, the instrument used to ``express'' NP-completeness is the demand matrix $D$. In the real world, the computational workload for the network is generally expected to vary significantly with time, unlike the network's hardware, which (in addition to its structural properties already discussed) does not rapidly change. Our results are in this sense closely relevant to the hardness of the real world reconfigurable routing problem.


We take this opportunity to identify some specific questions we have left open, as well as several more general avenues for future work in this area. 
First, it would be interesting to study the restriction of the problem to cases where $\mathrm{\Delta}_S$ is greater than $1$ and $\mu$ is a fixed constant. Results in this setting would ``bridge the gap'' between Theorems \ref{thm:2switchedRRP} and \ref{thm:3switchedRRP}, and Theorems \ref{thm:hypercubeRRPsigma3} and \ref{thm:lunarRRP}. Analogously, there is a gap for $1$-\textsc{Switched RRP} on hypercubes between $\sigma=0$ (which is solvable in polynomial time) and $\sigma=3$ (which is an intractable case). The complexity of the problem with $\sigma=1$ and $\sigma=2$ remains open for hypercubes (note that results for arbitrary networks do exist when $\sigma=2$, as shown in Table \ref{table:results}). Our Theorem \ref{thm:lunarRRP} in some sense sits between these two open cases, since the restriction to $\sigma=1$ entails a restriction to $\delta=1$, which itself entails a restriction to $\sigma=2$. 

Secondly, the present work considers only exact computation. In \cite{FFSV20} the authors establish inapproximability within $\Omega(\log n)$ for RRP in a more permissive setting (making use of variable link weights). However, the empty solution (there are no dynamic links and all demands are routed through the static network only) is a $\frac{\log n}{\mu}$-approximation for $\mathrm{\Delta}_S$-\textsc{Switched RRP} on hypercubes. (This follows straightforwardly from hypercubes having logarithmic diameter.) It would be interesting to see what (in)approximability results can be derived in our model with fixed link weights, with and without restrictions to realistic topologies. 

Lastly, parameterized algorithms may provide more fine-grained insights into the computational complexity of reconfigurable routing. Our Theorems \ref{thm:2switchedRRP} and \ref{thm:3switchedRRP} establish that structural parameters of the static network, such as treewidth, are insufficient to yield fixed-parameter tractable (fpt) algorithms (unless P=NP).
However, it would be interesting to see whether it is possible to obtain an fpt algorithm by additionally parameterizing by the sum of the demand matrix $D$; some structural parameters for the digraph representation of the demands, $D^\prime$;  the dynamic link weight ${\mu}$; or a combination of these.

\bibliography{RRP_algowin}{}
\bibliographystyle{plain}

\end{document}